\documentclass[letterpaper,10 pt, conference]{ieeeconf}
\IEEEoverridecommandlockouts                         
\usepackage{fancyhdr}
\usepackage{hyperref}
\usepackage{xspace}
\usepackage[font={small}]{caption}
\usepackage{color}
\usepackage{setspace}
\usepackage{enumerate}
\usepackage{graphics}
\usepackage{cite}
\usepackage{dsfont}
\usepackage{latexsym}
\usepackage{amsmath}
\usepackage{amssymb}
\usepackage{amsfonts}
\usepackage{amsthm}
\usepackage{times}
\usepackage{color}
\usepackage{verbatim}
\usepackage{xcolor}
\usepackage{epsfig} 
\usepackage{subcaption}
\usepackage{booktabs}
\usepackage{multirow,array}

\let\labelindent\relax
\usepackage{enumitem}

\def\mbb{\mathbb}
\def\mcal{\mathcal}

\def\dSP0{\delta_{SP0}}
\def\dRT0{\delta_{RT0}}
\def\dPS1{\delta_{PS1}}
\def\dTR1{\delta_{TR1}}

\def\dgdn{\frac{\partial g}{\partial n}}

\def\int{ {\text{int}} }

\def\balph{\bar{\alpha}}

\def \be {\begin{equation}}
\def \ee {\end{equation}}
\def \ba {\begin{aligned}}
\def \ea {\end{aligned}}

\theoremstyle{plain}
\newtheorem{theorem}{Theorem}[section]

\newtheorem{proposition}{Proposition}[section]
\newtheorem{lemma}{Lemma}[section]

\newtheorem{assumption}{Assumption}

\newtheorem{definition}{Definition}

\def\bs{\boldsymbol}

\normalsize\title{\LARGE \bf
	The Tragedy of the Commons in Multi-Population Resource Games
	\thanks{ }}

\author{
	Yamin Vahmian, Keith Paarporn
	\thanks{Y. Vahmian and K. Paarporn are with the Department of Computer Science, University of Colorado, Colorado Springs. Contact: \texttt{ \{yvahmian,kpaarpor\}@uccs.edu}. This paper is based on previous work that studied a two-population setting \cite{paarporn2024two}, but with no strategic game interactions between high-level decision-makers. This work is supported by ONR grant \#N000142612120, and in part by NSF grant \#ECCS-2346791.
	}
}

\begin{document}
\thispagestyle{plain}
\pagestyle{plain}

\maketitle

\begin{abstract}	
	Self-optimizing behaviors can lead to outcomes where collective benefits are ultimately destroyed, a well-known phenomenon known as the ``tragedy of the commons". 
	These scenarios are widely studied using game-theoretic approaches to analyze strategic agent decision-making. 
	In this paper, we examine this phenomenon in a bi-level decision-making hierarchy, where low-level agents belong to multiple distinct populations, and high-level agents make decisions that impact the choices of the local populations they represent. 
	We study strategic interactions in a context where the populations benefit from a common environmental resource that degrades with higher extractive efforts made by high-level agents. 
	We characterize a unique symmetric Nash equilibrium in the high-level game, and investigate its consequences on the common resource. 
	While the equilibrium resource level degrades as the number of populations grows large, there are instances where it does not become depleted.
    We identify such regions, as well as the regions where the resource does deplete.
\end{abstract}

\section{Introduction}

Population games are well-suited for analyzing the emergent collective behaviors among a large number of individuals \cite{sandholm2010population}.
A central topic in this area concerns whether the self-interested behaviors lead to outcomes where common benefits are destroyed.
This is the well-known phenomenon called the ``tragedy of the commons''. A primary research goal is to identify, and ultimately influence, the factors that can avert such tragedies.

Feedback-evolving games are a class of models that directly incorporates an environmental state that co-evolves with population behaviors \cite{weitz2016oscillating,tilman2020evolutionary}.
They have been extensively applied to examine the utilization of common resources~\cite{stella2023impact,zhang2023multi}, social distancing in epidemics~\cite{satapathi2023coupled,saad2023dynamics}, and ecological dynamics~\cite{tilman2020evolutionary}.
This research has extended our understanding of which types of learning dynamics \cite{arefin2021imitation,stella2023impact,paarporn2023non,paarporn2024madness} and control policies \cite{wang2020steering,gong2022limit} are able to avert tragic outcomes. 
However, much of this research is in the context of a single population, where everyone is independently interacting with a common environment.
In many scenarios, individuals' behaviors are influenced by higher-level decision-making agents.
For example, neighboring states or countries set different environmental policies, yet they all consume resources from a shared environment (e.g. fish in the ocean, drinking water, and clean air).

In this paper, we build on this framework by considering multiple populations and a bi-level hierarchy of decision-making. 
Each population is represented by a high-level agent that is able to set the incentive policies for their local population. 
We focus on a particular setting where one ``responsible'' agent implements a pro-environmental policy for its population such that the common resource can be sustained in the absence of other populations.
All other high-level agents are considered ``greedy'', who implement policies that induce total consumption behavior in their populations.
Our study investigates the strategic decision-making among the greedy high-level agents, where they are able to control the allowable magnitude of consumption in their populations.

The goal in this paper then is to characterize the emergent strategic behavior in this hierarchical setting, and identify conditions for which the multi-population behavior induces or averts a tragedy of the commons. 
Specifically, we focus on a normal-form game between the high-level greedy agents with coupling constraints on their strategies, which we call the \emph{resource extraction game}. 
The utilities in this game are based on the stable long-term outcomes from the evolutionary behavior from the low-level individuals.
Each greedy agent seeks to extract as much of the resource as possible -- however, higher extractive efforts degrade the resource level.

Our primary contribution is the characterization of Nash equilibria in the resource extraction game (Theorem~\ref{thm:equilibrium}).
In particular, our results show there is a unique symmetric Nash equilibrium for any instance of the game, i.e. all greedy agents choose the same extractive effort.
We then identify the parameter regimes for which the common resource can be sustained in the equilibrium, even when the number of greedy populations grows large (Proposition~\ref{prop:limit}).


\noindent\textbf{Related works:} There is an emerging body of work that focuses on strategic interactions in multi-population settings.
Evolutionary dynamics over a network of populations is considered in \cite{govaert2022population}, where convergence to equilibria is established.
Externalities between two interacting populations have been studied in contexts such as epidemic spreading \cite{certorio2023epidemic} and shared environmental states \cite{kawano2018evolutionary,betz2024evolutionary}.
Several recent papers propose hierarchical decision-making structures: \cite{chen2025hierarchical} proposes new extensions for population games,
\cite{dayanikli2024multi} studies mean-field games with minor and major agents in the regulation of carbon emissions,
and \cite{datar2022strategic,smartcities8010036} consider competing service providers as high-level agents that strategically set prices.

Section \ref{sec:prelim} provides preliminaries on single-population feedback-evolving games. Section \ref{sec:model} formulates our multi-population setup and resource extraction game.
Section \ref{sec:analysis} presents an analysis and main results. Section \ref{sec:quality} examines consequences of the main results with numerical studies.

\section{Preliminary: single population}\label{sec:prelim}


Consider a single population of interacting agents that have access to a common pool resource, where their choices are to cooperate (low consumption, L) or defect (high consumption, H). At time $t \geq 0$, let $x(t) \in [0,1]$ denote the fraction of low consumers and $n(t) \in [0,1]$ the resource level, with $n=1$ corresponding to a fully abundant state and $n=0$ to a depleted state. In a standard feedback-evolving game, these quantities obey the following coupled dynamics:
\begin{equation}\label{eq:1pop}
    \begin{aligned}
        \dot{x} &= x(1-x)g(x,n) \\
        \dot{n} &=  n(1-n)(\theta x - \alpha(1-x))
    \end{aligned}
\end{equation}
where $g(x,n) := \pi_L(x,n)-\pi_H(x,n)$ is the payoff difference between low and high consumers where $\pi_L(x,n)$, $\pi_H(x,n)$ denote their payoffs respectively (to be defined soon), $\theta>0$ is the restoration rate due to low consumption, and $\alpha > 0$ is the resource extraction rate due to high consumption activity. 
The $\dot{x}$ equation above is the standard replicator dynamics, and the $\dot{n}$ equation describes logistic growth for the resource level. This system exhibits a feedback mechanism: agents' strategies influence $n$ which in turn affects the payoffs. We note that the system \eqref{eq:1pop} is forward-invariant in the interior set, $(x,n) \in (0,1)^2$.

The agent payoffs are described by the environment-dependent $2\times 2$ payoff matrix,
\begin{equation}
	A_n = n\begin{bmatrix} R_1 & S_1 \\ T_1 & P_1 \end{bmatrix} + (1-n)\begin{bmatrix} R_0 & S_0 \\ T_0 & P_0 \end{bmatrix}
\end{equation}
Here, the first row and column corresponds to a low consumer, and the second row and column corresponds to a high consumer. Entry $ij$ ($i,j \in \{\mcal{L},\mcal{H}\}$) indicates the experienced payoff to an agent using strategy $i$ when encountering an agent using strategy $j$. We denote $x \in [0,1]$ as the fraction (or frequency) of agents in the population using strategy $\mcal{L}$. The payoff experienced by each type of agent is then given by
\begin{equation}\label{eq:pi}
	\pi_\mcal{L}(x,n) =  [A_n [x,1-x]^\top ]_1, \quad \pi_\mcal{H}(x,n) = [A_n [x,1-x]^\top ]_2
\end{equation}

The payoffs are determined by the parameters in the $A_0$ and $A_1$ matrices. The $A_1$ matrix is the payoff matrix when the environment is abundant. Following the literature on feedback-evolving games, we make the following assumption about the $A_1$ matrix. 
\begin{assumption}\label{assume:PD}
	High consumption is the dominant strategy in $A_1$, i.e. $\dTR1 := T_1 - R_1 > 0$ and $\dPS1 := P_1 - S_1 > 0$
\end{assumption}
On the other hand, the $A_0$ matrix describes incentives when the environment is near depletion, and can be interpreted to be an ``\emph{environmental policy}" that is implemented to reduce pressure on the common resource (e.g. subsidies for using electric vehicles). The payoff structure of the $A_0$ matrix is completely determined from the parameters $\dSP0 := S_0 - P_0$ and $\dRT0 := R_0 - T_0$. 

In this paper, we will also make the following assumption about the parameters $(\dSP0,\dRT0)$.
\begin{assumption}\label{assume:dgdn}
	We assume that $\dgdn(x) < 0$ for all $x \in [0,1]$. This is equivalent to $\dSP0 > -\dPS1$ and $\dRT0 > -\dTR1$.
\end{assumption}
Assumption \ref{assume:dgdn} asserts that the relative payoff to low consumers in the responsible population  monotonically decreases as the resource level improves. In other words, high consumption becomes more incentivized as resource become more available.


The result below summarizes the findings of the originating work \cite{weitz2016oscillating}, which identified conditions on the policies $(\dSP0,\dRT0)$ that enable long-term behavior of \eqref{eq:1pop} to sustain the environmental resource.

\begin{theorem}[adapted from \cite{weitz2016oscillating}]\label{thm:PNAS}
	The environmental policy $(\dSP0,\dRT0)$ determines the asymptotic properties of \eqref{eq:1pop} as follows.
	
	\vspace{1mm}
	
	\noindent 1) \emph{Resource sustained:} If  $(\dSP0,\dRT0)$ satisfies
	\begin{equation}\label{eq:responsible1}
		\dSP0 > 0 \text{ and } -\frac{\theta}{\alpha} \dSP0 \leq \dRT0 < \frac{\dTR1}{\dPS1} \dSP0,
	\end{equation}
	then the fixed point $(x^*,n^*) = (\frac{\alpha}{\alpha + \theta}, \frac{g(x^*,0)}{-\dgdn(x^*)}) \in (0,1)\times[0,1)$ is the only asymptotically stable fixed point in the system. Here, $n^* = 0$ if and only if $-\frac{\theta}{\alpha} \dSP0 = \dRT0$.

    \vspace{1mm}
	
	\noindent 2) \emph{Oscillating tragedy of the commons:} If $\dSP0 > 0$ and $\frac{\dTR1}{\dPS1} \dSP0 < \dRT0$, then the system exhibits convergence to the heteroclinic cycle composed of the boundary of $[0,1]^2$. If $\frac{\dTR1}{\dPS1} \dSP0 = \dRT0$, then all trajectories of the system are closed orbits centered around the neutrally stable interior fixed point $(x^*,n^*) = (\frac{\alpha}{\alpha + \theta}, \frac{g(x^*,0)}{-\dgdn(x^*)}) \in (0,1)^2$.

    \vspace{1mm}
	
	\noindent 3) \emph{Resource collapse:} If $(\dSP0,\dRT0)$ does not satisfy the conditions of items 1) or 2), then the only asymptotically stable fixed point has $n = 0$.
    
\end{theorem}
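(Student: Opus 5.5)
I will analyze \eqref{eq:1pop} directly on the square $[0,1]^2$. The first step is to make the payoff difference explicit. Since $g$ is affine in $n$ and linear in $x$,
\begin{equation*}
g(x,n) = (1-n)\bigl[\dRT0 x + \dSP0(1-x)\bigr] - n\bigl[\dTR1 x + \dPS1(1-x)\bigr].
\end{equation*}
Hence $g(x,0)=\dRT0 x+\dSP0(1-x)$ and $\dgdn(x) = -(\dRT0+\dTR1)x - (\dSP0+\dPS1)(1-x)$, which is negative by Assumption~\ref{assume:dgdn}.

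Next I will list the fixed points. These are the four corners, possibly boundary points where $g=0$ on the edges $n=0$ or $n=1$, and the interior point. The interior point requires $\theta x-\alpha(1-x)=0$, which gives $x^*=\alpha/(\alpha+\theta)$. It also requires $g(x^*,n^*)=0$, which gives $n^* = g(x^*,0)/(-\dgdn(x^*))$. Because $g(x^*,1) = -[\dTR1 x^*+\dPS1(1-x^*)]<0$ under Assumption~\ref{assume:PD}, we always have $n^*<1$. We have $n^*\ge 0$ exactly when $g(x^*,0)\ge0$. Substituting $x^*$, this is $\alpha\dRT0+\theta\dSP0\ge 0$, i.e. $\dRT0\ge -\frac{\theta}{\alpha}\dSP0$, with equality iff $n^*=0$. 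This gives the left inequality in \eqref{eq:responsible1} and the stated characterization of $n^*=0$.

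The stability of the corners and edge points follows from linearizing. At each corner the Jacobian is diagonal, and its entries are $\pm g$ at the corner and $\pm(\theta x-\alpha(1-x))$. The signs are therefore governed by $\dSP0$ (corner $(0,0)$), $\dRT0$ (corner $(1,0)$), and $-\dPS1,-\dTR1$ (the $n=1$ corners). On the $n=1$ edge, $g<0$, so $(0,1)$ attracts along $x$ but is repelled along $n$ since $\dot n$ has sign $-\alpha$ there, giving the flow direction $(1,1)\to(0,1)$. On $n=0$, the flow is toward $x=1$ when $\dSP0>0$. Hence when $\dSP0>0$ all corners are saddles, and the boundary forms a heteroclinic cycle. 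Item 3) comes from the same sign tables. If $\dSP0\le 0$, or if $\dSP0>0$ and $\dRT0<-\frac{\theta}{\alpha}\dSP0$, then an equilibrium on $n=0$ attracts. This is $(0,0)$ in the first case, or the edge point with $g(x,0)=0$ otherwise, which lies in the region $\dot n<0$. Meanwhile no interior equilibrium exists, or it is a saddle.

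The main work, and the main obstacle, is the interior dynamics for items 1) and 2). I will compute the Jacobian at $(x^*,n^*)$. Its diagonal entries are $x^*(1-x^*)\partial_x g$ and $0$, and the off-diagonal product is $x^*(1-x^*)n^*(1-n^*)(\alpha+\theta)\dgdn(x^*)$, which is negative. So the determinant is positive, and the trace has the sign of $\partial_x g(x^*,n^*)$. Writing $g(x,n)=a(n)x+b(n)$ with $a(n)=(1-n)(\dRT0-\dSP0)-n(\dTR1-\dPS1)$, I will show that $a(n^*)<0$ is equivalent to $\dRT0<\frac{\dTR1}{\dPS1}\dSP0$. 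The cleanest route is to eliminate $n^*$ through $g(x^*,n^*)=0$: then $\operatorname{sign}a(n^*)=-\operatorname{sign}b(n^*)$ evaluated at $g=0$, and this reduces to comparing the ratios $\dRT0/\dSP0$ and $\dTR1/\dPS1$. Local stability is not enough, so to get global behavior I will use a Dulac-type or constant-of-motion argument. Dividing the vector field by $x(1-x)n(1-n)$ yields a separable system. The candidate function
\begin{equation*}
V(x,n)=\theta\ln x+\alpha\ln(1-x)-\int^n \frac{g(x,s)}{s(1-s)}\,ds
\end{equation*}
(or an adjusted version) is conserved exactly when $\dRT0/\dSP0=\dTR1/\dPS1$, i.e. when $g$ factors as (function of $x$)$\cdot$(function of $n$) plus a matching term. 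This gives the closed orbits. Off this case, $\dot V$ has a definite sign, and together with the absence of limit cycles (Bendixson–Dulac with multiplier $1/(x(1-x)n(1-n))$) this yields convergence either to $(x^*,n^*)$ or to the boundary heteroclinic cycle. Verifying this definite sign is the step I expect to need the most care.
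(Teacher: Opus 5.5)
The paper gives no proof of this theorem; it defers to Weitz et al. Your local analysis is correct. The formula for $g$, the location of $(x^*,n^*)$ and the sign criterion for $n^*$, the diagonal corner Jacobians, $\det J(x^*,n^*)>0$, and the equivalence $\partial_x g(x^*,n^*)<0 \iff \dRT0<\frac{\dTR1}{\dPS1}\dSP0$ (obtained by eliminating $n^*$) all check out. Item 1 only claims that $(x^*,n^*)$ is the unique asymptotically stable fixed point, so for $n^*>0$ linearization already suffices. Nonlinear tools are needed only for item 2 and for the non-hyperbolic endpoint $n^*=0$, where $\det J=0$.

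The genuine gap is in the global step, which fails as written. Dividing by $x(1-x)n(1-n)$ does not give a separable system. With $h(x):=\theta x-\alpha(1-x)$, the rescaled field is $\bigl(g/(n(1-n)),\,h(x)/(x(1-x))\bigr)$, and its divergence is $a(n)/(n(1-n))$ with your $a(n)$. This changes sign on $(0,1)$ whenever $\dRT0-\dSP0$ and $\dTR1-\dPS1$ have the same strict sign, so Bendixson--Dulac with that multiplier excludes nothing. Your $V$ is not conserved in the neutral case either. There $g=q(x)r(n)$ with $q(x):=\dTR1 x+\dPS1(1-x)$, so your $n$-integral equals $q(x)$ times a function of $n$, and $\dot V$ keeps a term proportional to $q'(x)\,\dot x=(\dTR1-\dPS1)\dot x$. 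In addition, its $x$-part is not an antiderivative of $h/(x(1-x))$; the correct one is $-\alpha\ln x-\theta\ln(1-x)$. So the claimed definite sign of $\dot V$ has nothing to rest on.

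The missing idea is to put $q(x)>0$ into the multiplier. Write $g=(1-n)p(x)-n\,q(x)$ with $p(x):=\dRT0 x+\dSP0(1-x)$. A direct computation gives $(p/q)'=-Y/q^2$, where $Y=\dTR1\dSP0-\dRT0\dPS1$ is the constant of \eqref{eq:Y_def}.

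\emph{Dulac form.} With $B=1/(x(1-x)n(1-n)q(x))$, the divergence of $B\cdot(\dot x,\dot n)$ is $-Y/(n\,q(x)^2)$. This is sign-definite and vanishes exactly in the neutral case.

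\emph{Lyapunov form.} Let $F$ be an antiderivative of $h/(x(1-x)q)$, put $\kappa:=p(x^*)/q(x^*)=n^*/(1-n^*)$, and set $V:=F(x)-\kappa\ln n-\ln(1-n)$. Then $\dot V=(1-n)\,h(x)\bigl(p(x)/q(x)-\kappa\bigr)$. Both factors are monotone in $x$ and vanish at $x^*$, so $\dot V$ has the sign of $-Y$ off the line $x=x^*$ and vanishes identically when $Y=0$.

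For $\kappa>0$, $V\to+\infty$ on $\partial[0,1]^2$ and has its unique minimum at $(x^*,n^*)$. This gives closed orbits for $Y=0$ and LaSalle convergence for $Y>0$. For $Y<0$ it gives escape to the boundary, where the four saddle corners produce the heteroclinic cycle. For $\kappa=0$, the same $V$ on $(0,1)\times[0,1)$ settles the non-hyperbolic case $n^*=0$.

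Smaller points:
\begin{enumerate}
\item ``All corners are saddles when $\dSP0>0$'' also needs $\dRT0>0$. If $\dRT0<0$, then $(1,0)$ is a source and the edge $n=0$ carries the equilibrium $(\dSP0/(\dSP0-\dRT0),0)$. Under item 1 this equilibrium is a saddle, because $h>0$ there.
\item An interior equilibrium is never a saddle, since its determinant is positive. In item 3 it is instead unstable: existence with $\dSP0\le 0$ forces $\dRT0>0$, hence $Y<0$ and a positive trace.
\item At $\dSP0=0<\dRT0$ the origin is non-hyperbolic, and the edge flow $\dot x=\dRT0\,x^2(1-x)$ leaves it. Your claim that $(0,0)$ attracts therefore fails there. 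This is a degenerate defect inherited from the statement itself.
\end{enumerate}
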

The policies described in item 1 above induces a stable and non-zero resource level, and the policies in items 2 and 3 induce undesirable environmental outcomes -- either the resource oscillates between bad and good states indefinitely, or it collapses entirely. 
We consequently obtain a natural definition for a set of ``responsible" policies.

\begin{figure}
    \centering
    \includegraphics[width=0.75\linewidth]{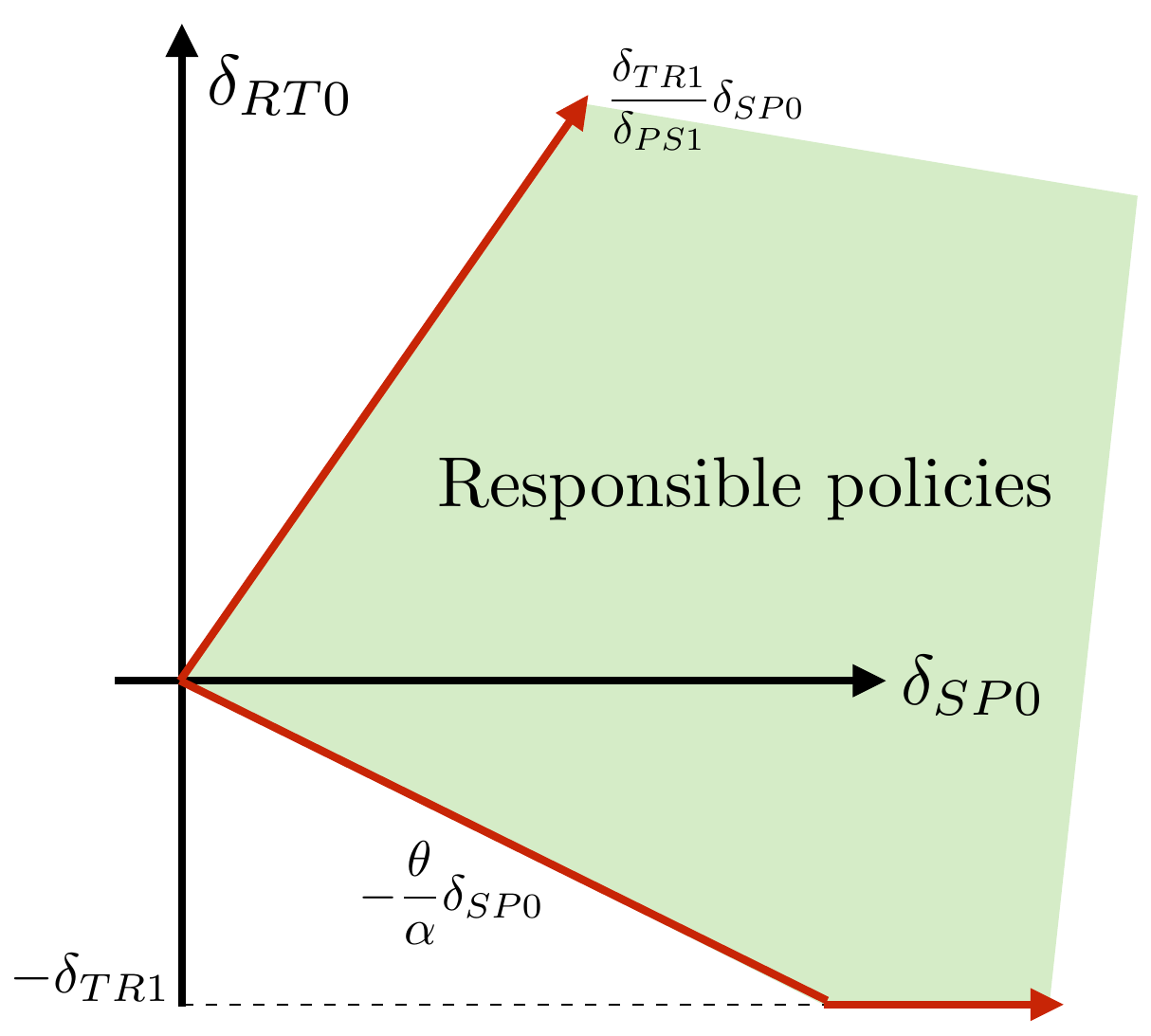}
    \caption{The shaded green region depicts the set of all policies $(\dSP0,\dRT0)$ that are responsible, as defined as in Definition \ref{def:responsible}.}\label{fig:responsible}
\end{figure}

\begin{definition}\label{def:responsible}
    An environmental policy $(\dSP0,\dRT0)$ is \emph{responsible} if it satisfies~\eqref{eq:responsible1} and Assumption \ref{assume:dgdn}. Specifically,
    \begin{equation}
        \max\{\frac{-\theta}{\alpha}\dSP0,-\dTR1\} \leq \dRT0 < \frac{\dTR1}{\dPS1} \dSP0.
    \end{equation}
\end{definition}
In other words, a responsible policy for a population averts a ``tragedy of the commons" (TOC), which we refer to as instances where $n(t) \rightarrow 0$. 
Figure~\ref{fig:responsible} shows a diagram of the region of responsible policies.

\section{Model: Extraction game with multiple populations}\label{sec:model}

Consider an extension of \eqref{eq:1pop} to multiple populations, all of whom share the same common-pool resource. 
In this paper, we focus on a scenario where one of the populations has a responsible policy, and all other populations do not, which are termed ``greedy". 
The state of this system is specified as $(x, x_1, \ldots,x_M,n) \in [0,1]^{M+2}$, where $x$ denotes the fraction of cooperators in the responsible population, $x_i$ the fraction of cooperators in greedy population $i \in \mcal{G} := \{1,\ldots,M\}$, and $n$ the common resource level. 
The state evolves according to the dynamics
\begin{equation}\label{eq:multipop}
    \begin{aligned}
        \dot{x} &= x(1-x)g(x,n) \\
        \dot{x}_i &= x_i(1-x_i)g_i(x,n), \ \forall i \in\mcal{G} \\
        \dot{n} &= \epsilon n(1-n)\left( \theta x + \sum_{i \in \mcal{G}} \theta_i x_i \right. \\
        &\quad\quad\quad \left. - \left( \alpha(1-x) + \sum_{i \in \mcal{G}} \alpha_i(1-x_i) \right) \right).
    \end{aligned}
\end{equation}
Here, we have introduced corresponding rate parameters $\alpha_i,\theta_i > 0$ for the greedy populations. In line with Assumption \ref{assume:PD}, we also assume corresponding payoff parameters in the abundant state satisfy $\dTR1^i, \dPS1^i > 0$.  However, for all greedy populations, we will assume that their environmental policies all satisfy $\dSP0^i,\dRT0^i < 0$.  This implies that the payoff differences for greedy populations satisfy $g_i(x_i,n) := \pi_L^i - \pi_H^i < 0$ for all states $(x_i,n)$, where $\pi_L^i,\pi_H^i$ are defined analogously to \eqref{eq:pi}. This means that low consumption is never incentivized in greedy populations.

We reserve the un-indexed parameters $\alpha$, $\theta$, $\dTR1,\dPS1$, and $\dSP0,\dRT0$ for those associated with the responsible population. By construction, system \eqref{eq:multipop} is forward-invariant on the set $(0,1)^{M+2}$, and can only admit stable fixed points for which $x_i = 0$ for all $i \in \mcal{G}$. 

\begin{lemma}\label{lem:multistability}
	Denote $\bar{\alpha} := \sum_{i\in\mcal{G}} \alpha_i$. The asymptotic dynamics of the multi-population system  \eqref{eq:multipop} are summarized below.
	
	\noindent 1) Suppose $\bar{\alpha} > \theta$. Then a tragedy of the commons is  asymptotically stable, i.e. $\lim_{t\rightarrow \infty} n(t) = 0$.
	
	\vspace{1mm}
	
	\noindent 2) Suppose $\bar{\alpha} < \theta$. 
	\begin{enumerate}[label=(\alph*)]
		\item If $\frac{\bar{\alpha} - \theta}{\alpha + \bar{\alpha}}\dSP0 \leq  \dRT0 < \frac{\dTR1}{\dPS1}\dSP0$, then the only asymptotically stable fixed point is $(x_1^*,\mathbf{0}_M,n^*)$, where
	\begin{equation}\label{eq:nstar}
		x^* := \frac{\alpha + \bar{\alpha}}{\alpha + \theta}, \quad n^* := -\frac{g(x^*,0)}{\dgdn(x^*)} \in (0,1).
	\end{equation}
		\item If $\dRT0 < \frac{\bar{\alpha} - \theta}{\alpha + \bar{\alpha}}\dSP0$, then a tragedy of the commons is the only asymptotically stable outcome.
	\end{enumerate}
	
	\vspace{1mm}
	
	\noindent 3) Suppose $\balph = \theta$. If $\dRT0 > 0$, then there is a locally stable line segment of fixed points given by 
	\be
		\left\{ (1,\mathbf{0}_M,n) : n \in \left[0,\frac{\dRT0}{\dRT0 + \dTR1}\right) \right\}
	\ee
	All other fixed points are isolated and unstable. If $\dRT0 \leq 0$, then a tragedy of the commons is asymptotically stable.
\end{lemma}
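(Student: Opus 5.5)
Because $g_i<0$ on $[0,1]^2$ for every greedy population, I would first show that every $x_i(t)\to 0$. On the compact set $[0,1]^2$ there is a uniform bound $g_i\le -c_i<0$, so $\dot x_i\le -c_i x_i(1-x_i)$. Hence each $x_i$ decreases monotonically and tends to $0$ at an exponential rate once it drops below $1/2$. In the resource equation the greedy contribution is then $\sum_i(\theta_i x_i-\alpha_i(1-x_i))\to-\bar\alpha$. The asymptotic behaviour of \eqref{eq:multipop} should therefore be governed by the limiting planar system
\begin{equation*}
\dot x = x(1-x)g(x,n),\qquad \dot n=\epsilon n(1-n)\bigl(\theta x-\alpha(1-x)-\bar\alpha\bigr),
\end{equation*}
and I would make this precise with the theory of asymptotically autonomous systems (Markus/Thieme): the $\omega$-limit sets of the full system project onto chain-recurrent sets of the planar system. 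For the local-stability claims it is enough to linearise \eqref{eq:multipop} at the relevant fixed points. At a fixed point with all $x_i=0$, the $x_i$-directions contribute eigenvalues $g_i(x,n)<0$, and the Jacobian is block triangular, so stability reduces to the planar system.

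Next I would analyse the planar system, which mirrors the proof of Theorem~\ref{thm:PNAS} with the $n$-nullcline moved to $x^*=(\alpha+\bar\alpha)/(\alpha+\theta)$.

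In case 1, $\bar\alpha>\theta$ gives $x^*>1$. Then $\dot n\le \epsilon n(1-n)(\theta-\bar\alpha+o(1))<0$ along trajectories, so $n\to0$ directly and no planar argument is needed.

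In case 2, $x^*\in(0,1)$. The interior fixed point is $(x^*,n^*)$ with $n^*=-g(x^*,0)/\dgdn(x^*)$, since $g$ is affine in $n$. Writing $g(x^*,0)=x^*\dRT0+(1-x^*)\dSP0$, the condition $n^*\ge0$ becomes exactly $\dRT0\ge\frac{\bar\alpha-\theta}{\alpha+\bar\alpha}\dSP0$, because $(1-x^*)/x^*=(\theta-\bar\alpha)/(\alpha+\bar\alpha)$. The condition $n^*<1$ uses Assumption~\ref{assume:dgdn}. To get stability versus oscillation I would compute the trace of the Jacobian at $(x^*,n^*)$, namely $x^*(1-x^*)\partial_x g$, or equivalently use the Dulac function $1/(x(1-x)n(1-n))$ as in \cite{weitz2016oscillating}. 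Its divergence has the sign of $\partial_x g(x,n)/(x(1-x))$ up to positive factors. So what has to be shown is that the upper bound $\dRT0<\frac{\dTR1}{\dPS1}\dSP0$ forces $\partial_x g<0$ along the relevant set. I expect this to be the main obstacle, because $\partial_x g$ depends on $n$ and I may need the exact sign argument from the original paper. With that in hand, Bendixson--Dulac rules out periodic orbits, and the boundary corners are saddles: at $(0,0)$, $\dot n<0$ but $g(0,0)=\dSP0>0$ pushes $x$ up, and one checks the remaining corners in the same way. Poincaré--Bendixson then gives convergence to $(x^*,n^*)$, which proves (a). For (b), $n^*<0$, so there is no interior equilibrium. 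The $n=0$ edge then has a stable point $(\hat x,0)$ with $g(\hat x,0)=0$ and $\hat x>x^*$, or the corner $(1,0)$, so $\dot n<0$ near it and the TOC is stable; excluding cycles again gives that it is the only attractor.

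In case 3, $x^*=1$, so $\dot n=-\epsilon n(1-n)(\alpha+\bar\alpha)(1-x)\le0$ and the whole edge $x=1$ consists of fixed points. Along this edge the transverse $x$-eigenvalue is $-g(1,n)=-(\dRT0-n(\dRT0+\dTR1))$, which is negative exactly when $n<\dRT0/(\dRT0+\dTR1)$. This requires $\dRT0>0$ and yields the stated segment, with the $n$-direction neutral. The remaining fixed points would be checked by linearisation to be unstable. If $\dRT0\le0$, then $g(1,n)<0$ for all $n>0$, so $x$ leaves a neighbourhood of $1$ and $n$ strictly decreases. An argument using $n$ as a Lyapunov-type function then gives $n\to0$.
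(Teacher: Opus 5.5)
Your reduction is correct and is essentially the whole of the paper's own argument. Each $g_i$ is bilinear with negative values at the four corners, so it is uniformly negative and $x_i\to0$ exponentially. The greedy populations then act as a constant drain $\balph$, and at equilibria with $x_i=0$ the Jacobian is block triangular with extra eigenvalues $g_i(0,n)<0$. The paper imports the planar analysis from the single-greedy-population case \cite{paarporn2024two}, with $\balph$ in place of that population's rate.

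Your treatment of items 1, 2(b) and 3 is fine, up to two slips in 2(b):
\begin{itemize}
\item The attracting edge equilibrium $(\hat x,0)$, with $\hat x=\dSP0/(\dSP0-\dRT0)$, satisfies $\hat x<x^*$, not $\hat x>x^*$. That inequality is what makes $\dot n<0$ near it.
\item The corner $(1,0)$ is never stable when $\balph<\theta$, since its $n$-eigenvalue is $\epsilon(\theta-\balph)>0$.
\end{itemize}
No cycle-exclusion argument is needed in 2(b), because there is no interior equilibrium.

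The genuine gap is in 2(a). You leave the decisive sign open, and the route you propose for it fails. For $B=1/(x(1-x)n(1-n))$, the divergence of $BF$ is $\partial_x g(x,n)/(n(1-n))=(an+b)/(n(1-n))$, with $a,b$ as in \eqref{eq:g_coeffs}. At $n=1$ the numerator equals $\dPS1-\dTR1$. So whenever $\dPS1>\dTR1$ it changes sign on the square, whatever the policy. For example, $\dTR1=1$, $\dPS1=2$, $\dSP0=1$, $\dRT0=0.2$ satisfies the hypotheses of 2(a), yet $an+b$ vanishes at $n=4/9$. Hence $\dRT0<\frac{\dTR1}{\dPS1}\dSP0$ does not force $\partial_x g<0$ on the square, and Bendixson--Dulac with this $B$ breaks down. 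Separately, saddle corners alone would not exclude the boundary heteroclinic cycle, which is present when $\dRT0>0$.

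However, 2(a) only asserts which fixed point is asymptotically stable. The missing idea is a one-line identity at the equilibrium. Since $g=axn+bx+cn+d$ and $n^*=-(bx^*+d)/(ax^*+c)$,
\[
\partial_x g(x^*,n^*)=an^*+b=\frac{bc-ad}{ax^*+c}=\frac{\dTR1\dSP0-\dRT0\dPS1}{\dgdn(x^*)} .
\]
So the trace $x^*(1-x^*)\partial_x g(x^*,n^*)$ is negative exactly when $\dRT0<\frac{\dTR1}{\dPS1}\dSP0$. The determinant $-\epsilon(\alpha+\theta)x^*(1-x^*)n^*(1-n^*)\dgdn(x^*)$ is positive. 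Combined with instability of the corners, and, if $\dRT0<0$, of $(\hat x,0)$ (where now $\hat x>x^*$), this proves 2(a). The endpoint case $n^*=0$ has a zero eigenvalue and needs a center-manifold check.

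If you also want global convergence, use a weighted Dulac function such as $x^{\beta-1}(1-x)^{-\beta-2}n^{\gamma_1-1}(1-n)^{\gamma_2-1}$. Its divergence is a positive factor times a function that is bilinear in $(x,n)$. One can check that, given $g(x^*,0)>0$, the four corner values of that function can be made negative by choice of $\beta,\gamma_1,\gamma_2$ exactly when $\dTR1\dSP0-\dRT0\dPS1>0$.
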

\begin{proof}
    The main arguments for this result are fully detailed in \cite{paarporn2024two}, which established the case for a single greedy population. 
    Because all greedy states $x_i(t) \rightarrow 0$ by construction, the extension to $M$ greedy populations easily generalizes because it only requires using the total consumption rate $\bar{\alpha}$ in place of the consumption rate for the single greedy population. 
    We therefore omit these details, as no new fundamental arguments are needed.
\end{proof}



%

In item 1 above, the greedy populations induce a tragedy of the commons if their total consumption rate, $\bar{\alpha}$, exceeds the responsible population's restoration rate $\theta$. Item 2a provides a region of sustainable policies for population 1. This region gets smaller as $\bar{\alpha}$ increases while remaining less than $\theta$. For notational convenience, let us denote this region as 
\begin{equation}\label{eq:Valpha}
    \small
    \mcal{V}(\bar{\alpha}) := \left\{ (y_1,y_2) : \max\left\{\frac{\bar{\alpha} - \theta}{\alpha + \bar{\alpha}}y_1, -\dTR1 \right\} \leq  y_2 < \frac{\dTR1}{\dPS1}y_1 \right\}
\end{equation}

Observe that $\mcal{V}(\bar{\alpha})$ is a sub-region of the set of responsible policies (Definition~\ref{def:responsible}), in which the lower bound increases as the total extraction $\bar{\alpha}$ increases.
Item 2b provides a region where the responsible population fails to sustain the resource even though $\bar{\alpha} < \theta$. 

\begin{figure}
    \centering
    \includegraphics[width=0.65\linewidth]{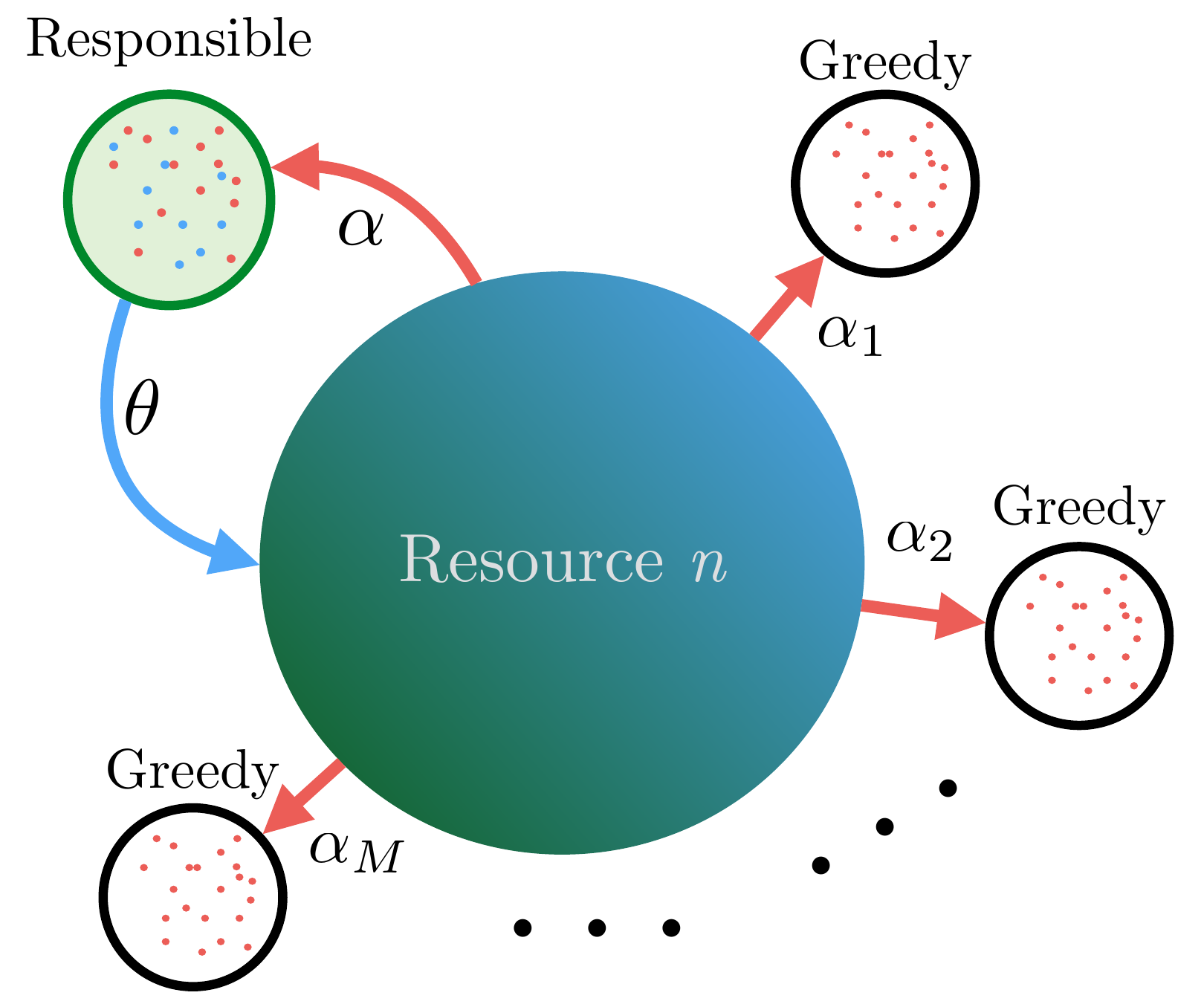}
    \caption{The resource extraction game is a strategic-form game among $M$ players who each represent one of the greedy populations. Player $i\in\mcal{G}$ chooses extraction rate $\alpha_i \geq 0$ for its population, where higher rates degrade the resource.}\label{fig:game}
\end{figure}

\subsection*{Hierarchical resource extraction game}

We now pose a high-level decision-making problem that is the central focus of this paper. We model the resource extraction game as a complete-information game in which each high-level agent observes the aggregate extraction level and the individual extraction choices of the other agents.
This assumption enables agents to compute best responses, and reflects settings where extraction decisions or their aggregate effects are publicly observable or effectively shared among decision-makers.
Consider $M$ high-level agents that represent each of the greedy populations. 
The agents are strategic in selecting its population's resource extraction rate, $\alpha_i \geq 0$. 
This is representative of, for example, the intensity of a corporation's extraction of a natural resource in the presence of competitors, as well as analogous settings such as competing jurisdictions regulating shared environmental resources or firms influencing common infrastructure capacity. 
The \emph{utility} of each agent $i \in \mcal{G}$ is defined as
\begin{equation}\label{eq:utilities}
    U_i(\bs{\alpha}) := \alpha_i R(\bs{\alpha})
\end{equation}
where $\bs{\alpha} := (\alpha_1,\ldots,\alpha_M)$ is the strategy profile of the extraction decisions, and $R(\bs{\alpha})$ is the steady-state resource level induced by the constituent low-level agents, characterized in Lemma~\ref{lem:multistability}:
\begin{equation}\label{eq:resource_level}
    R(\balph) :=
    \begin{cases}
        -\frac{g(\frac{\alpha+\balph}{\alpha+\theta},0)}{\dgdn(\frac{\alpha+\balph}{\alpha+\theta})}, \text{ if } \bar{\alpha} \leq \theta \text{ and } (\dSP0,\dRT0) \in \mcal{V}(\bar{\alpha}) \\
        0, \text{ else}
    \end{cases}
\end{equation}
Lemma \ref{lem:multistability} (item 3) states that when $\bar{\alpha} = \theta$, there is a range of possible asymptotic resource levels. In our definition of $R(\cdot)$ above, we have elected to assign the highest stable resource level
\footnote{This is done for two reasons. First, it makes the resource function well-defined and left-continuous at $\theta$. Thus, $R$ attains a maximum value in the interval $[0, \theta]$. Second, since no particular initial condition is prescribed in the model, we may view the irresponsible populations to be ``optimistic" regarding the best case among all possible outcomes.}, $\frac{\dRT0}{\dRT0 + \dTR1}$.

The utilities~\eqref{eq:utilities} define a strategic-form game between the $M$ local authorities, with continuous strategy spaces $\mcal{A}_i := \mbb{R}_{\geq 0}$. 
Figure~\ref{fig:game} provides an illustration. 
We refer to this game as the \emph{resource extraction game}, and denote an instance of it as the tuple $\mcal{E}(\dSP0,\dRT0,\alpha,\theta)$ of fixed parameters. Table \ref{tab:variables} summarizes the key system parameters and decision variables in the resource extraction game.

In our forthcoming analysis, we seek to characterize Nash equilibria of $\mcal{E}$, i.e. any strategy profile $\bs{\alpha}$ that satisfies $U_i(\alpha_i,\alpha_{-i}) \geq U_i(\alpha_i',\alpha_{-i})$ for all $i \in \mcal{G}$ and for all $\alpha_i' \neq \alpha_i$. 
In particular, we wish to draw attention to the extractive behavior and resource level $R(\bs{\alpha})$ under equilibria of the game, and how these quantities change as the number of greedy populations increases.

\begin{table}[htbp]
\centering
\caption{Parameters and variables of the multi-population system}
\label{tab:variables}
\small
\begin{tabular}{@{}p{1.4cm}p{6.5cm}@{}}
\toprule
\multicolumn{2}{@{}l@{}}{\textbf{System Parameters}} \\
\midrule
$\dSP0,\dRT0$ & Responsible policy in deplete state (see Figure~\ref{fig:responsible}) \\
$\dTR1, \dPS1$ & Responsible policy in replete state (fixed, both positive) \\
$\alpha, \theta$ & Consumption, restoration rates of responsible population \\
\midrule
\multicolumn{2}{@{}l@{}}{\textbf{Decision Variables}} \\
\midrule
$\alpha_i$ & Consumption rate of greedy population $i =1,\ldots,M$ \\
$\balph$ & Total consumption rate of greedy populations \\
$\balph_{-i}$ & Total consumption rate of greedy populations, excluding population $i$ \\
\bottomrule
\end{tabular}
\end{table}

\section{Analysis and Main Results}\label{sec:analysis}

In this section, we seek to characterize Nash Equilibria of the extraction game $\mcal{E}$. Notice from \eqref{eq:resource_level} that the shared resource will be destroyed if the total extraction $\bar{\alpha} > \theta$, upon where all players will receive zero utility. Indeed, there are an infinite number of equilibria under which  the resource is destroyed.
\begin{lemma}
    An equilibrium $\bs{\alpha}$ of $\mcal{E}$ satisfies $R(\bs{\alpha}) = 0$ if and only if for all $i \in \mcal{G}$, we have $\bar{\alpha} - \alpha_i > \theta$.
\end{lemma}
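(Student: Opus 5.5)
The plan is to prove the two directions separately. Both rest on a single monotonicity fact about $R$ as a function of the total extraction $\balph$. I will use the standing interpretation that the responsible population's policy is responsible with $\dRT0>0$ (the case where item 3 of Lemma~\ref{lem:multistability} gives a positive resource level at $\balph=\theta$). Under this interpretation, $R(\balph)>0$ for every $\balph\in[0,\theta]$ and $R(\balph)=0$ for $\balph>\theta$.

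To justify this, note that for $\balph<\theta$ the lower boundary $\frac{\balph-\theta}{\alpha+\balph}\dSP0$ of $\mcal{V}(\balph)$ in \eqref{eq:Valpha} is negative, since $\dSP0>0$. It is therefore strictly below $\dRT0>0$. The upper bound $\dRT0<\frac{\dTR1}{\dPS1}\dSP0$ and the bound $\dRT0\geq -\dTR1$ do not depend on $\balph$. Hence $(\dSP0,\dRT0)\in\mcal{V}(\balph)$ with strict inequality in the lower bound, and item 2(a) of Lemma~\ref{lem:multistability} gives $n^*\in(0,1)$. At $\balph=\theta$, \eqref{eq:resource_level} assigns the value $\frac{\dRT0}{\dRT0+\dTR1}>0$.

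\emph{``If'' direction.} Suppose $\balph-\alpha_i>\theta$ for every $i$. Then every profile $(\alpha_i',\alpha_{-i})$ with $\alpha_i'\ge 0$ has total extraction at least $\balph-\alpha_i>\theta$. So $R=0$ and $U_i=0$ for every possible deviation. No player can strictly improve, so $\bs{\alpha}$ is an equilibrium, and $R(\bs{\alpha})=0$ because $\balph>\theta$.

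\emph{``Only if'' direction.} Suppose $\bs{\alpha}$ is an equilibrium with $R(\bs{\alpha})=0$, so every player receives utility $0$. Suppose some $i$ has $\balph_{-i}:=\balph-\alpha_i<\theta$. Then player $i$ can deviate to any $\alpha_i'\in(0,\theta-\balph_{-i}]$. The new total lies in $(\balph_{-i},\theta]$, so by the monotonicity fact $U_i=\alpha_i' R(\cdot)>0$, which contradicts the equilibrium property. Hence $\balph_{-i}\ge\theta$ for all $i$.

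The main obstacle is the boundary case $\balph_{-i}=\theta$, where the argument above does not apply. In that case every $\alpha_i'>0$ pushes the total above $\theta$ and yields $0$, so player $i$ has no profitable deviation. I would first try to exclude this case by examining the other players. If $\alpha_i=0$, then $\balph=\theta$ and $R>0$, which contradicts $R(\bs{\alpha})=0$. So $\alpha_i>0$. I would then check whether some other player $j$ must have $\balph_{-j}<\theta$.

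This check can fail. For $M=3$, take $\alpha_1=\theta$ and $\alpha_2=\alpha_3=\theta/2$. Then $\balph_{-1}=\theta$ and $\balph_{-2}=\balph_{-3}=\tfrac{3}{2}\theta>\theta$. This profile is an equilibrium with $R=0$ but violates the strict inequality. So the proof will establish the characterization $\balph-\alpha_i\ge\theta$ for all $i$, together with $\balph>\theta$. I would state the strict version only under the extra requirement that no player sits exactly on the boundary $\balph_{-i}=\theta$, or else interpret the lemma's inequality as non-strict.
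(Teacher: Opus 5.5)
Your ``if'' direction is correct for every responsible policy, since it only uses $R=0$ for totals above $\theta$. It is essentially the paper's entire justification: the paper only remarks that under such profiles every unilateral deviation still leaves the total extraction above $\theta$. The paper gives no argument for the converse, and your counterexample shows that none can exist. Under the paper's weak-inequality definition of equilibrium, the strict ``only if'' direction fails at the boundary $\balph-\alpha_i=\theta$. A simpler instance is $M=2$, $\alpha_1=\alpha_2=\theta$: deviating to $0$ earns $0\cdot R(\theta)=0$, and any positive deviation pushes the total past $\theta$. This refutation holds whatever the sign of $\dRT0$, so your diagnosis of the statement is right.

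The gap is in your repaired characterization, which depends on treating $\dRT0>0$ as a standing assumption. The paper only assumes the policy is responsible (Definition~\ref{def:responsible}), which allows $\dRT0\le 0$. It also treats that regime explicitly: the second case of \eqref{eq:restricted_strategies}, Theorem~\ref{thm:equilibrium}, and items 3--4 of Proposition~\ref{prop:limit}. Put $\kappa:=\frac{\theta\dSP0+\alpha\dRT0}{\dSP0-\dRT0}$. For $\dRT0\le 0$, the $\balph$-dependent lower bound in \eqref{eq:Valpha} is equivalent to $\balph\le\kappa$, and $g(\frac{\alpha+\balph}{\alpha+\theta},0)>0$ iff $\balph<\kappa$. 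So $R>0$ exactly on $[0,\kappa)$, and $\kappa<\theta$ when $\dRT0<0$. Your positivity fact then fails, and so does the repaired condition $\balph-\alpha_i\ge\theta$. For $M=2$ and $\alpha_1=\alpha_2=\kappa\in(0,\theta)$, every deviation leaves the total at least $\kappa$ and earns zero. This profile is therefore an equilibrium with $R=0$, yet $\balph-\alpha_i=\kappa<\theta$.

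The correct statement uses the capacity $\tau:=\theta$ if $\dRT0>0$ and $\tau:=\kappa$ if $\dRT0\le 0$: an equilibrium has $R(\bs{\alpha})=0$ if and only if $\balph-\alpha_i\ge\tau$ for all $i\in\mcal{G}$. Your deviation argument proves this once you choose $\alpha_i'\in(0,\tau-\balph_{-i})$ strictly. That strict choice also handles $\dRT0=0$, where $R(\theta)=0$ and your claim that $R>0$ on all of $[0,\theta]$ fails at the endpoint.
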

Under the equilibria described above, the total extraction will still exceed $\theta$ regardless of any player's unilateral deviation. 
We thus focus our attention on finding equilibria for which the resource level is sustained, i.e. $R(\bs{\alpha}) > 0$. 
Moving forward, we consider extraction profiles that do not destroy the resource, i.e. ones that obey the following coupling constraint,
\begin{equation}
    \bs{\alpha} \in \mcal{C} := \{\bs{\alpha} : \bar{\alpha} \leq \theta \text{ and } (\dSP0,\dRT0) \in \mcal{V}(\bar{\alpha})\}.
\end{equation}
This causes player $i$'s available extraction choices to be coupled with the profile of extraction rates $\alpha_{-i}$ from the other players. Let us define a player's restricted strategy set as
\begin{equation}\label{eq:restricted_strategies}
    \begin{aligned}
        \mcal{A}_i(\alpha_{-i}) &:= \\
        &\hspace{-8mm}\begin{cases}
            [0,\theta-\bar{\alpha}_{-i}], &\text{ if } \dRT0 > 0 \\
            [0,\frac{\alpha\dRT0 + \theta\dSP0}{\dSP0-\dRT0} - \bar{\alpha}_{-}i], &\text{ if } \frac{\bar{\alpha}_{-i} - \theta}{\alpha + \bar{\alpha}_{-i}}\dSP0 \leq \dRT0 \leq 0
        \end{cases}.
    \end{aligned}
\end{equation}
Let us now denote $\mcal{E}^*_M(\dSP0,\dRT0,\alpha,\theta)$ to mean the extraction game $\mcal{E}$ under the coupling constraint $\mcal{C}$. In the following result, we demonstrate that it is a concave game.

\begin{lemma}\label{lem:concave-game}
    The game $\mcal{E}^*_M(\dSP0,\dRT0,\alpha,\theta)$ is a concave game. That is, for any $i \in \mcal{G}$ and any $\alpha_{-i}$, the function $U_i(\alpha_i,\alpha_{-i})$ is concave in $\alpha_i$.
\end{lemma}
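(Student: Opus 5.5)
The plan is to write $R$ explicitly as a linear-fractional function of $\alpha_i$ on the restricted strategy set $\mcal{A}_i(\alpha_{-i})$, and then check the sign of the second derivative of $U_i = \alpha_i R$ directly.

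\textbf{Step 1: an explicit formula for $R$.} From \eqref{eq:pi}, $g(x,0) = x\dRT0 + (1-x)\dSP0$. Since $g$ is affine in $n$,
\[
\dgdn(x) = g(x,1)-g(x,0) = -x\dTR1 - (1-x)\dPS1 - g(x,0).
\]
Set $x = \frac{\alpha+\bar{\alpha}_{-i}+\alpha_i}{\alpha+\theta}$, which is affine in $\alpha_i$ with slope $k := \frac{1}{\alpha+\theta} > 0$. On $\mcal{A}_i(\alpha_{-i})$ the profile lies in $\mcal{C}$, so
\[
R = \frac{N(x)}{D(x)}, \qquad N(x) = a + bx, \qquad D(x) = c + dx,
\]
with $a = \dSP0$, $b = \dRT0-\dSP0$, $c = \dSP0+\dPS1$, and $d = \dRT0-\dSP0+\dTR1-\dPS1$. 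Here $D(x) = -\dgdn(x) > 0$ for all $x\in[0,1]$ by Assumption~\ref{assume:dgdn}. I will also check that at the endpoint $\bar\alpha=\theta$ (i.e.\ $x=1$) the formula gives $\frac{\dRT0}{\dRT0+\dTR1}$. This is exactly the value assigned in \eqref{eq:resource_level}, so the formula holds on the whole closed interval.

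\textbf{Step 2: differentiate.} Write $' = \frac{d}{d\alpha_i}$. For a linear-fractional function,
\[
R' = k\,\frac{bc-ad}{D^2}, \qquad R'' = -2k^2 d\,\frac{bc-ad}{D^3}.
\]
Then $U_i'' = 2R' + \alpha_i R''$, which factors as
\[
U_i'' = \frac{2k(bc-ad)}{D(x)^3}\bigl(D(x) - k\alpha_i d\bigr).
\]
A direct expansion gives $bc-ad = \dRT0\dPS1 - \dSP0\dTR1$. This is negative by the upper bound $\dRT0 < \frac{\dTR1}{\dPS1}\dSP0$, which is built into $\mcal{V}(\bar\alpha)$, together with $\dPS1>0$. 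As a by-product, $R$ is strictly decreasing in $\alpha_i$.

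\textbf{Step 3: the sign of the bracket.} This is the step I expect to be the main obstacle, and the observation that resolves it is the following. We have
\[
D(x) - k\alpha_i d = c + d\,(x - k\alpha_i) = D(x'), \qquad x' := \frac{\alpha+\bar\alpha_{-i}}{\alpha+\theta}.
\]
Since $x' \in [0,x] \subseteq [0,1]$, Assumption~\ref{assume:dgdn} again gives $D(x')>0$. Combining the three signs, $U_i'' \le 0$ on the interval $\mcal{A}_i(\alpha_{-i})$, so $U_i(\cdot,\alpha_{-i})$ is concave there.

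If $\mcal{A}_i(\alpha_{-i})$ is empty or a single point, concavity is trivial. The case distinction in \eqref{eq:restricted_strategies} only changes the right endpoint of the interval, not the formula for $R$, so no further casework is needed.
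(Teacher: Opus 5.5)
Your proposal is correct and follows essentially the same route as the paper: both write $R$ as the linear-fractional function $-g(x,0)/\frac{\partial g}{\partial n}(x)$ of $\alpha_i$, compute $U_i'' = 2R' + \alpha_i R''$, and factor out a sign-definite prefactor (the paper's $Y = \delta_{TR1}\delta_{SP0} - \delta_{RT0}\delta_{PS1} > 0$ is the negative of your $bc-ad$). Both then reduce the remaining bracket to $\frac{\partial g}{\partial n}$ evaluated at $\frac{\alpha+\bar{\alpha}_{-i}}{\alpha+\theta}$, which has the required sign by Assumption~\ref{assume:dgdn}; your version additionally makes explicit that the sign of $bc-ad$ comes from the strict upper bound in $\mathcal{V}(\bar\alpha)$ and that the formula covers the endpoint $\bar\alpha=\theta$.
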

\begin{proof}
    To establish concavity, we need to show that $U_i''(\alpha_i,\alpha_{-i}) < 0$ for all $\alpha_i \in \mcal{A}_i(\alpha_{-i})$ (here, $U'$ denotes partial derivative w.r.t. $\alpha_i$). For all $\alpha_i \in \mcal{A}_i(\alpha_{-i})$, it holds that $R(\alpha_i,\alpha_{-i}) = -\frac{g(x,0)}{\dgdn(x)}$, where $x^*=\frac{\alpha+\bar{\alpha}}{\alpha+\theta}$ \eqref{eq:nstar}. We have
    \begin{equation}
        U''_i(\alpha_i,\alpha_{-i}) = 2 R'(\alpha_i,\alpha_{-i}) + \alpha_i R''(\alpha_i,\alpha_{-i}).
    \end{equation}
    We introduce the following parameters:
    \begin{equation}\label{eq:g_coeffs}
    	\begin{aligned}
        	a &:= \dSP0 - \dRT0 + \dPS1 - \dTR1 \\
        	b &:= \dRT0 - \dSP0 \\
        	c &:= -(\dPS1 + \dSP0) < 0 \\
        	d &:= \dSP0 > 0
    	\end{aligned}
	\end{equation}
    Because these are payoff parameters of the responsible population, we know that $c<0$, $d > 0$, but the signs of $a,b$ can be positive or negative depending on the policy. Using this notation, we can write
    \begin{equation}\label{eq:g_explicit}
        \begin{aligned}
            g(x,n) &= axn + bx + cn + d \\
            \dgdn(x) &= ax + c
        \end{aligned}
    \end{equation}
    for any $x,n \in [0,1]$.
    After algebraic computations, we can write the second derivative in the form
    \begin{equation}
        U''_i(\alpha_i,\alpha_{-i}) = \frac{2Y}{(\alpha+\theta)(\dgdn(x^*))^2}\left(-1 + \frac{\alpha_i}{\alpha+\theta}\frac{a}{\dgdn(x^*)} \right),
    \end{equation}
    where
    \begin{equation}\label{eq:Y_def}
        Y := bc - ad = \dTR1\dSP0 - \dRT0\dPS1 > 0.
    \end{equation}
    Hence, the sign of $U_i''$ is the sign of the term in the parentheses, which we can show is negative. The condition for which this is negative is: 
    \begin{equation}\label{eq:negative_condition}
        \begin{aligned}
            -1 + \frac{a}{\alpha+\theta}\frac{\alpha_i}{\dgdn(x)} < 0 \iff 0 > \dgdn(\frac{\alpha + \bar{\alpha}_{-i}}{\alpha+\theta}).
        \end{aligned}
    \end{equation}
    Here, we have used the fact that $\dgdn(\cdot) < 0$ (Assumption \ref{assume:dgdn}), and that one can write $\dgdn(x) = ax + c$ with $c:=-(\dPS1 + \dSP0)$. This concludes the proof.

\end{proof}

In the next result, we leverage the concavity of the game $\mcal{E}^*_M$ to establish the best-response functions,
\begin{equation}
    \text{BR}_i(\alpha_{-i}) := \arg\max_{\alpha_i\in\mcal{A}_i(\alpha_{-i})} U_i(\alpha_i,\alpha_{-i}).
\end{equation}

\begin{lemma}\label{lem:alpha-star}
    Consider any $\bs{\alpha} \in \mcal{C}$ and any player $i$. Define
    \begin{equation}
        \begin{aligned}
            &F_i(\bar\alpha_{-i}) := \\
            &\frac{\alpha+\theta}{a}\left(-\dgdn\left(\frac{\alpha+\bar{\alpha}_{-i}}{\alpha+\theta}\right) + \frac{1}{b}\sqrt{b\dgdn\left(\frac{\alpha+\bar{\alpha}_{-i}}{\alpha+\theta}\right) Y} \right).
        \end{aligned}
    \end{equation}
     where $a,b$ are defined in \eqref{eq:g_coeffs}, and $Y$ is defined in \eqref{eq:Y_def}. Define
     \begin{equation}
        \begin{aligned}
            &C(\bar\alpha_{-i};\dSP0) := \frac{1}{2}\left[-\left(\dTR1+\frac{\theta-\bar{\alpha}_{-i}}{\alpha+\theta}\dPS1\right) + \right. \\
            &\left. \sqrt{\left(\dTR1+\frac{\theta-\bar{\alpha}_{-i}}{\alpha+\theta}\dPS1\right)^2 + 4\frac{\theta-\bar{\alpha}_{-i}}{\alpha+\theta}\dTR1\dSP0} \right].
        \end{aligned}
    \end{equation}
    Then player $i$'s best response is given as follows. 
    
    \vspace{2mm}
    
    \noindent If $C(\balph_{-i};\dSP0) \leq \dRT0 \leq \frac{\dTR1}{\dPS1}\dSP0$, then
    \begin{equation}\label{eq:BR1}
    	\text{BR}_i(\alpha_{-i}) = \theta-\bar{\alpha}_{-i}.
    \end{equation}

    \noindent If  $\max\{\frac{\bar{\alpha}_{-i} - \theta}{\alpha - \bar{\alpha}_{-i}}\dSP0,-\dTR1\} \leq \dRT0 < C(\balph_{-i};\dSP0)$, then
    \begin{equation}\label{eq:BR2}
        \text{BR}_i(\alpha_{-i}) =
        \begin{cases}
            F_i(\bar\alpha_{-i}), &\text{if } a \neq 0 \\
            \frac{1}{2}\left(\frac{\theta\dSP0 + \alpha\dRT0}{\dSP0 - \dRT0} - \bar{\alpha}_{-i} \right), &\text{if } a = 0
        \end{cases}.
    \end{equation}

\end{lemma}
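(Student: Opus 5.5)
Since Lemma~\ref{lem:concave-game} shows $U_i(\cdot,\alpha_{-i})$ is strictly concave on the interval $\mcal{A}_i(\alpha_{-i})$, the best response is unique. It is either the interior stationary point of $U_i$, when one exists in the feasible interval, or the right endpoint of $\mcal{A}_i(\alpha_{-i})$. The left endpoint $\alpha_i=0$ is never optimal, because $U_i(0,\alpha_{-i})=0$ while $U_i'(0,\alpha_{-i}) = R(\bar\alpha_{-i})>0$ on $\mcal{C}$. The plan is therefore to (i) compute the first-order condition explicitly, (ii) solve it, and (iii) determine when its root lies inside $\mcal{A}_i(\alpha_{-i})$.

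\textbf{Step 1 (first-order condition).} Write $x = \frac{\alpha+\bar\alpha}{\alpha+\theta}$, so $\frac{dx}{d\alpha_i} = \frac{1}{\alpha+\theta}$. Using \eqref{eq:g_explicit}, $R = -\frac{bx+d}{ax+c}$, and a quotient-rule computation gives
\begin{equation*}
\frac{dR}{dx} = -\frac{b(ax+c)-a(bx+d)}{(ax+c)^2} = -\frac{Y}{(ax+c)^2}.
\end{equation*}
Hence $U_i' = R + \alpha_i R'$, and $U_i'=0$ becomes
\begin{equation*}
-(bx+d)(ax+c)(\alpha+\theta) = \alpha_i Y.
\end{equation*}
Set $x_{-i} := \frac{\alpha+\bar\alpha_{-i}}{\alpha+\theta}$ and $z:=\alpha_i/(\alpha+\theta)$, so that $x = x_{-i}+z$. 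The condition is then a quadratic in $z$. Writing $h:=\dgdn(x_{-i}) = ax_{-i}+c$, we have $ax+c = h+az$ and $bx+d = (bx_{-i}+d)+bz$. Substituting and using $b h - a(bx_{-i}+d) = Y$ reduces the quadratic to
\begin{equation*}
ab z^2 + 2bh z + h(bx_{-i}+d)=0.
\end{equation*}
For $a\neq0$ the root is
\begin{equation*}
z = \frac{-bh \pm \sqrt{b^2h^2 - abh(bx_{-i}+d)}}{ab} = \frac{-bh\pm\sqrt{bhY}}{ab},
\end{equation*}
which gives $F_i$ after multiplying by $(\alpha+\theta)$ and selecting the sign. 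The sign is selected by continuity as $a\to0$, or by requiring $z>0$ and $h+az<0$ (Assumption~\ref{assume:dgdn}). For $a=0$ the equation is linear, and $z = -(bx_{-i}+d)/(2b)$. Converting back, and using that $bx+d=0$ is exactly the depletion threshold $\bar\alpha = \frac{\theta\dSP0+\alpha\dRT0}{\dSP0-\dRT0}$, yields the midpoint formula in \eqref{eq:BR2}. I expect this step to also require checking that $bhY\ge0$, which holds on the relevant region since $b<0$ and $h<0$ there.

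\textbf{Step 2 (interior versus boundary).} By concavity, the best response is the right endpoint $\theta-\bar\alpha_{-i}$ exactly when $U_i'\ge 0$ at that endpoint. When $\dRT0\le 0$, the right endpoint of $\mcal{A}_i$ is the depletion threshold where $R=0$, so $U_i'<0$ there and the root is always interior. When $\dRT0>0$, the right endpoint is $\bar\alpha=\theta$, i.e.\ $x=1$. The condition $U_i'\ge0$ at $x=1$ reads
\begin{equation*}
-(b+d)(a+c)(\alpha+\theta)\ge (\theta-\bar\alpha_{-i})Y.
\end{equation*}
Substituting $b+d=\dRT0$, $a+c=-\dTR1-\dRT0$, and $Y=\dTR1\dSP0-\dRT0\dPS1$ turns this into the quadratic inequality
\begin{equation*}
\dRT0^2 + \Big(\dTR1 + \tfrac{\theta-\bar\alpha_{-i}}{\alpha+\theta}\dPS1\Big)\dRT0 - \tfrac{\theta-\bar\alpha_{-i}}{\alpha+\theta}\dTR1\dSP0 \ge 0.
\end{equation*}
Its positive root is precisely $C(\bar\alpha_{-i};\dSP0)$. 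So $\dRT0\ge C$ gives \eqref{eq:BR1}, and $\dRT0<C$ (together with feasibility from $\mcal{V}$) gives the interior solution \eqref{eq:BR2}. The upper bound $\frac{\dTR1}{\dPS1}\dSP0$ is just $Y>0$.

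\textbf{Main obstacle.} The delicate part is Step 1's sign selection and the algebraic simplification of the discriminant into $\sqrt{bhY}$ for arbitrary signs of $a$. I would handle it by verifying that the selected root is the unique one with $0<z\le 1-x_{-i}$. This uses $U_i'(0)>0$ and concavity: exactly one root lies in the feasible interval, and it is the one that tends to the $a=0$ solution as $a\to0$.
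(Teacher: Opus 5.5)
Your proposal is correct and follows the paper's proof essentially step for step. It uses the same first-order condition reduced to $abz^2+2bhz+h\,g(x_{-i},0)=0$ with roots $(-bh\pm\sqrt{bhY})/(ab)$. It uses the same test of the sign of $U_i'$ at the right endpoint $\theta-\bar\alpha_{-i}$, which becomes a quadratic in $\dRT0$ whose positive root is $C(\bar\alpha_{-i};\dSP0)$. And it uses the same observation that for $\dRT0\le 0$ the right endpoint is the depletion threshold, which forces an interior maximizer.

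Your root selection through $\dgdn<0$ is a slightly cleaner substitute for the paper's ordering-plus-uniqueness argument: the other root gives $h+az_-=-\sqrt{bhY}/b>0$, so it cannot lie in the feasible interval. The one fact you leave unchecked, $b<0$ when $0<\dRT0<C$, follows because the quadratic defining $C$ is positive at $\dRT0=\dSP0$, so $C<\dSP0$.
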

\begin{proof}
    From Lemma \ref{lem:concave-game}, the function $D(\alpha_i) = U_i(\alpha_i,\alpha_{-i})$ is concave and continuous in its restricted strategy set $\alpha_i \in \mcal{A}_i(\alpha_{-i})$ \eqref{eq:restricted_strategies}.
    \begin{equation}
        \begin{aligned}
            D'(\alpha_i) &= R(\bs{\alpha}) + \alpha_i R'(\alpha_i,\alpha_{-i}) \\
            &= -\frac{g(x,0)}{\dgdn(x)} - \alpha_i \frac{Y}{(\alpha+\theta) (\dgdn(x))^2}.
        \end{aligned}
    \end{equation}
    where $x = \frac{\alpha+\alpha_i + \sum_{j\neq i} \alpha_j }{\alpha+\theta}$. We first verify that it is increasing at $\alpha_i = 0$. 
    Writing $\hat{\alpha}_{-i} = \frac{\alpha+\bar{\alpha}_{-i}}{\alpha+\theta}$ for compactness, and invoking Assumption \ref{assume:dgdn}, 
    the condition $D'(0) > 0$ is equivalent to the condition $g(\hat{\alpha}_{-i},0) = \dSP0(\frac{\theta-\bar{\alpha}_{-i}}{\alpha+\theta}) + \dRT0(\frac{\alpha+\bar{\alpha}_{-i}}{\alpha+\theta}) > 0$. 
    From the constraint $(\dSP0,\dRT0) \in \mcal{V}(\bar{\alpha}_i)$ \eqref{eq:Valpha}, we use the fact that $\dRT0 > \frac{\bar{\alpha}_{-i}-\theta}{\alpha+\theta} \dSP0$ to obtain $g(\hat{\alpha}_{-i},0) > 0$.

    \noindent\underline{\textbf{Case 1:} $C(\bar\alpha_{-i};\dSP0) \leq \dRT0 \leq \frac{\dTR1}{\dPS1}\dSP0$}. 
    In this regime, it holds that $\dRT0 > 0$ so that $\mcal{A}_i(\alpha_{-i}) = [0,\theta-\bar{\alpha}_{-i}]$. 
    The expression $D'(\theta-\bar{\alpha}_{-i})$ is a convex quadratic function in $\dRT0$, whose only positive root is given precisely by $C(\bar\alpha_{-i};\dSP0)$. 
    Therefore for all $\dRT0 \geq C_i(\bar\alpha_{-i};\dSP0)$, $D'(\theta-\bar{\alpha}_{-i}) \geq 0$. 
    Because $D'(0) > 0$, this means that $D$ is monotonically increasing on $\alpha_i \in \mcal{A}_i(\alpha_{-i})$, so that $\text{BR}_i(\alpha_{-i}) = \theta-\bar{\alpha}_{-i}$.

    \noindent\underline{\textbf{Case 2a:} $0 < \dRT0 < C(\bar\alpha_{-i};\dSP0)$}. Here, $\mcal{A}_i(\alpha_{-i}) = [0,\theta-\bar{\alpha}_i]$, and $D$ must have a critical point in $(0,\theta-\bar{\alpha}_i)$ because $D'(\theta-\bar{\alpha}_i) < 0$. The solution to the equation $D'(\alpha_i) = 0$ satisfies:
    \begin{equation}
        \dgdn(x)g(x,0) + Y \frac{\alpha_i}{\alpha+\theta} = 0 
    \end{equation}
    Using \eqref{eq:g_explicit}, we can write $\dgdn(x)g(x,0) = (a\frac{\alpha_i}{\alpha+\theta} + \dgdn(\hat{\alpha}_{-i}))(b\frac{\alpha_i}{\alpha+\theta} + g(\hat{\alpha}_{-i},0))$. The equation $D'(\alpha_i) = 0$ then becomes
    \begin{equation}\label{eq:BR_quadratic_eqn}
        \small
        ab\left(\frac{\alpha_i}{\alpha+\theta} \right)^2 + 2b\dgdn\left(\hat{\alpha}_{-i} \right) \frac{\alpha_i}{\alpha+\theta} + \dgdn\left(\hat{\alpha}_{-i} \right) g\left(\hat{\alpha}_{-i},0\right) = 0.
    \end{equation}
    In the case that $a=0$, then we obtain the solution \eqref{eq:BR2} (second entry). In the case that $a \neq 0$, the roots of this equation can be reduced to be: 
    \begin{equation}\label{eq:root_pm}
        \alpha_i^\pm = \frac{\alpha+\theta}{a}\left(-\dgdn(\hat{\alpha}_{-i}) \pm \frac{1}{b}\sqrt{b\dgdn(\hat{\alpha}_{-i}) Y} \right) = F_i(\bar\alpha_{-i}). 
    \end{equation}
    We are interested only in solutions $\alpha_i \geq 0$. 
    We know $Y$ is a positive constant (proof of Lemma \ref{lem:concave-game}, and $\dgdn(\cdot) < 0$ (Assumption \ref{assume:dgdn}). It also holds here that $b<0$ because $\dRT0 < C_i(\alpha_{-i};\dSP0)$. The sign of the radicand in \eqref{eq:root_pm} is positive, and we have that $\alpha_i^+ \geq 0$. This is because the sign of the parentheses term in \eqref{eq:root_pm} is the same sign as the parameter $a$, thus establishing that $\alpha_i > 0$. To see this, the condition that $-\dgdn(\hat{\alpha}_{-i}) + \frac{1}{b}\sqrt{b\dgdn(\hat{\alpha}_{-i}) Y} > 0$ is equivalent to $ag(\hat{\alpha}_{-i},0) > 0$. Since $g(\hat{\alpha}_{-i},0) > 0$, the condition holds if and only if $a > 0$. 
    
    Now, consider $\alpha_i^-$. Here, the parentheses term is positive because it is the sum of two positive terms. If $a < 0$, then $\alpha_i^- < 0$ and thus $\alpha_i^+ \geq 0$ is the only non-negative root. If $a > 0$, it holds that both roots are non-negative. However, we can deduce that $\alpha_i^+$ is the root that lies in $\mcal{A}_i(\alpha_{-i})$ for the following reasons. First, it holds that  $\alpha_i^+ < \alpha_i^-$. Because $D(\alpha_i)$ is continuous and concave for all $\alpha_i \in \mcal{A}_i(\alpha_{-i})$, it can have at most one critical point on this interval. 
    

    \noindent\underline{\textbf{Case 2b:} $\max\{-\dTR1,\frac{\bar{\alpha}_{-i} - \theta}{\alpha - \bar{\alpha}_{-i}}\dSP0\} \leq \dRT0 \leq 0$}. 
    
    \vspace{2mm}
    
    Here, $\mcal{A}_i(\alpha_{-i}) = [0,\frac{\alpha_r\dRT0 + \theta\dSP0}{\dSP0-\dRT0} - \bar{\alpha}_{-i}]$. Also in this regime, we have that $b < 0$. At $\alpha_i = \frac{\alpha_r\dRT0 + \theta\dSP0}{\dSP0-\dRT0} - \bar{\alpha}_{-i}$, it holds that $g(\frac{\alpha+\bar{\alpha}}{\alpha+\theta},0) = 0$, and thus $D(\frac{\alpha_r\dRT0 + \theta\dSP0}{\dSP0-\dRT0} - \bar{\alpha}_{-i}) = 0$. Since $D'(0) > 0$, there is a critical point in the interior of $\mcal{A}_i(\alpha_{-i})$. This must be given precisely by the expression $\alpha_i^+$ \eqref{eq:root_pm}.

\end{proof}

With the best-response functions established, we now state our main result, which characterizes the equilibrium extraction rate for all instances of $\mcal{E}^*_M$.

\begin{theorem}\label{thm:equilibrium}
    	Define
	\begin{equation}\label{eq:equil_alpha_pm}
		\begin{aligned}
    			E_\pm &:= \frac{1}{2M^2}\left[-\frac{\alpha+\theta}{a}\left(2M\dgdn(\frac{\alpha}{\alpha+\theta}) - \frac{M-1}{b}Y \right) \right. \\
    			& \quad\left. \pm \sqrt{\frac{(\alpha+\theta)^2Y}{a^2b} \left(4M^2 \dgdn(\frac{\alpha}{\alpha+\theta}) + (M-1)^2\frac{Y}{b} \right)} \right].
 		\end{aligned}
	\end{equation}
	The game $\mcal{E}^*_M(\dSP0,\dRT0,\alpha,\theta)$ admits a unique symmetric Nash equilibrium $(\alpha^*,\ldots,\alpha^*)$, given as follows.
	
	\vspace{2mm}
	
	\noindent If $C(\frac{M-1}{M}\theta;\dSP0) \leq \dRT0 \leq \frac{\dTR1}{\dPS1}\dSP0$, then
	\begin{equation}\label{eq:equil1}
		\alpha^* = \frac{\theta}{M}.
	\end{equation}
	\noindent If $\max\{\frac{-\theta}{\alpha}\dSP0,-\dTR1\} \leq \dRT0 < C(\frac{M-1}{M}\theta;\dSP0)$, then
	\begin{equation}\label{eq:equil2}
		\alpha^* = 
        \begin{cases}
            E_+, &\text{if } a < 0 \\
            E_-, &\text{if } a > 0 \\
            \frac{1}{M+1}\frac{\theta\dSP0 + \alpha\dRT0}{\dSP0 - \dRT0}, &\text{if } a = 0
        \end{cases},
	\end{equation}
    where $a$ is defined in \eqref{eq:g_coeffs}.
\end{theorem}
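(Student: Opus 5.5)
Fix $M$ and look for symmetric profiles $\bs{\alpha}=(\alpha^*,\ldots,\alpha^*)$ that are fixed points of the best-response map of Lemma~\ref{lem:alpha-star}. For such a profile, $\bar{\alpha}_{-i}=(M-1)\alpha^*$ for every $i$. So a symmetric Nash equilibrium is exactly a scalar $\alpha^*\ge 0$ solving $\alpha^* = \text{BR}_i((M-1)\alpha^*)$. Because each $U_i$ is concave in $\alpha_i$ on $\mcal{A}_i(\alpha_{-i})$ (Lemma~\ref{lem:concave-game}), the best response is the unique maximizer. Therefore checking this fixed-point equation is both necessary and sufficient. The plan has three parts: treat the two regimes of Lemma~\ref{lem:alpha-star} separately, solve the scalar equation in each, and then show that the regime conditions on $\dRT0$ select exactly one solution.

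\textbf{Boundary regime.} Suppose $\text{BR}_i=\theta-\bar{\alpha}_{-i}$. The fixed-point equation becomes $\alpha^*=\theta-(M-1)\alpha^*$, so $\alpha^*=\theta/M$ and $\bar{\alpha}_{-i}=\frac{M-1}{M}\theta$. Substituting this into the Case~1 condition of Lemma~\ref{lem:alpha-star} gives exactly $C(\frac{M-1}{M}\theta;\dSP0)\le\dRT0\le\frac{\dTR1}{\dPS1}\dSP0$, which yields \eqref{eq:equil1}. Uniqueness in this regime follows because the equation is linear. We also need to rule out a symmetric interior solution in the same regime. Since $D'(\theta-\bar{\alpha}_{-i})\ge 0$ at $\bar{\alpha}_{-i}=\frac{M-1}{M}\theta$, the first-order condition cannot hold strictly inside the interval at that point. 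For any other $\alpha^*$, it is enough to show that the interior-regime fixed point would violate feasibility; we will check this through the monotonicity argument below.

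\textbf{Interior regime.} Rather than substituting into $F_i$, impose the first-order condition $D'(\alpha_i)=0$ directly, in the form \eqref{eq:BR_quadratic_eqn}, with $\alpha_i=\alpha^*$ and $\bar{\alpha}_{-i}=(M-1)\alpha^*$. Write $s=\alpha^*/(\alpha+\theta)$ and $x_0=\alpha/(\alpha+\theta)$. Then $\hat{\alpha}_{-i}=x_0+(M-1)s$. Using \eqref{eq:g_explicit},
\begin{equation*}
\dgdn(\hat{\alpha}_{-i})=\dgdn(x_0)+a(M-1)s, \qquad g(\hat{\alpha}_{-i},0)=g(x_0,0)+b(M-1)s.
\end{equation*}
The condition $\dgdn(x)g(x,0)+Ys=0$ with $x=x_0+Ms$ then becomes the quadratic
\begin{equation*}
\bigl(\dgdn(x_0)+aMs\bigr)\bigl(g(x_0,0)+bMs\bigr)+Ys=0
\end{equation*}
in $s$. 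Expanding gives $abM^2s^2+\bigl(M(a\,g(x_0,0)+b\,\dgdn(x_0))+Y\bigr)s+\dgdn(x_0)g(x_0,0)=0$. Using $Y=bc-ad$ together with $g(x_0,0)=bx_0+d$ and $\dgdn(x_0)=ax_0+c$, one can rewrite $a\,g(x_0,0)=b\,\dgdn(x_0)-Y$. The linear coefficient then becomes $2Mb\,\dgdn(x_0)-(M-1)Y$, and the quadratic formula reproduces $E_\pm$ in \eqref{eq:equil_alpha_pm}. When $a=0$, the equation is linear, $g(x_0+Ms,0)$-type simplifications apply, and it yields $\frac{1}{M+1}\frac{\theta\dSP0+\alpha\dRT0}{\dSP0-\dRT0}$, mirroring the $a=0$ case of \eqref{eq:BR2}.

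\textbf{Root selection (the main obstacle).} We must show that exactly one root is nonnegative and lies in $\mcal{A}_i$, namely $E_+$ when $a<0$ and $E_-$ when $a>0$. I would argue as in Case~2a of Lemma~\ref{lem:alpha-star}. First, the radicand is nonnegative. Next, consider $h(s):=\dgdn(x)g(x,0)+Ys$. We have $h(0)=\dgdn(x_0)g(x_0,0)<0$, since $g(x_0,0)>0$ under the lower bound $\dRT0\ge-\frac{\theta}{\alpha}\dSP0$. At the feasibility endpoint, $h\ge 0$: either $g=0$ there, or, in the boundary case, $D'<0$ by the regime condition. So a root exists in the feasible interval. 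The leading coefficient $abM^2$ has sign $-\mathrm{sign}(a)$ because $b<0$. When $a<0$ the parabola opens upward, so $h$ crosses from negative to positive only at the larger root, which is $E_+$. When $a>0$ it opens downward, so the feasible crossing is the smaller root, which is $E_-$ (the sign conventions in $E_\pm$ should be checked against the factor $1/a$). The remaining delicate step is verifying that $b<0$ throughout this regime when $\dRT0$ is close to $C$. I expect this to follow from the same argument used in Lemma~\ref{lem:alpha-star}.
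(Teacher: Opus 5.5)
Your route is the paper's. You reduce to $\gamma=\text{BR}_i((M-1)\gamma)$, get $\theta/M$ at the cap, and reduce the interior case to the quadratic $Q$; your expansion, using $a\,g(x_0,0)=b\,\dgdn(x_0)-Y$, reproduces the paper's $K_1,K_0$. In the second regime, your root selection ($h(0)<0<h(s_{\max})$ plus the orientation of the parabola) is tidier than the paper's sign analysis of $K_0,K_1$. Because you impose the first-order condition directly and use concavity, you also do not need the paper's step that $C(\cdot;\dSP0)$ is decreasing.

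The gap is uniqueness in the first regime. You defer ruling out interior symmetric solutions to a ``monotonicity argument below'' that never appears, and your endpoint argument cannot supply it. There you only have $h(0)<0$ and $h(s_{\max})\le 0$, which excludes interior roots only when $h$ is convex or affine, i.e. $ab\ge 0$. But $ab<0$ does occur in this regime. For example, $\dPS1=2$, $\dTR1=1$, $\dSP0=1$ and $\dRT0\in[C(\frac{M-1}{M}\theta;\dSP0),\frac12)$ give a nonempty interval with $a>0>b$. A concave $h$ that is negative at both endpoints could a priori have two interior zeros, i.e. extra symmetric equilibria. The paper's proof is also silent here: its remark on $C$ only checks that the interior root is consistent in the second regime.

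The missing argument is the following. Set $T=M\gamma$. Player $i$'s marginal utility at the symmetric profile is $\psi(T)=R(T)+\frac{T}{M}R'(T)=-h/\dgdn(x)^2$, and
\[
\psi'(T)=\frac{M-1}{M}R'(T)+\frac{1}{M}\bigl(2R'(T)+TR''(T)\bigr).
\]
The first term is nonpositive because $R'=-\frac{Y}{(\alpha+\theta)\dgdn(x)^2}<0$. The second term is $\frac{1}{M}U_i''$ from Lemma~\ref{lem:concave-game} with $\balph_{-i}=0$ and $\alpha_i=T$, so it is negative because $\dgdn(\frac{\alpha}{\alpha+\theta})<0$. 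Hence $\psi$ is strictly decreasing on the feasible range, with $\psi(0)>0$, so $h$ has at most one zero there.

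This settles both regimes at once. If $\psi(T_{\max})\ge 0$, which is exactly $\dRT0\ge C(\frac{M-1}{M}\theta;\dSP0)$, the cap $\theta/M$ is the only symmetric equilibrium. Otherwise the unique interior zero is.

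Three smaller points:
\begin{itemize}
\item The other step you deferred, $b<0$ in the second regime, follows from $C(\cdot;\dSP0)<\dSP0$. Take the quadratic $z^2+(\dTR1+u\dPS1)z-u\dTR1\dSP0$ that defines $C$, with $u=\frac{\theta-\balph_{-i}}{\alpha+\theta}\in(0,1)$. At $z=\dSP0$ it equals $\dSP0(\dSP0+(1-u)\dTR1+u\dPS1)>0$, so its positive root lies below $\dSP0$.
\item No sign check against $1/a$ is needed, since $E_+\ge E_-$ by definition.
\item When you actually apply the quadratic formula, the discriminant of $Q$ comes out as $\frac{(\alpha+\theta)^2Y}{a^2b}\bigl(4M\dgdn(\frac{\alpha}{\alpha+\theta})+(M-1)^2\frac{Y}{b}\bigr)$. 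This has $4M$, not the printed $4M^2$, which looks like a typo in $E_\pm$ and does not affect the argument.
\end{itemize}
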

\begin{proof}
	The approach to proving this result is to consider symmetric profiles $(\gamma,\ldots,\gamma)$, and leverage the best-response function from Lemma \ref{lem:alpha-star} by solving the equation $\gamma = \text{BR}((M-1)\gamma)$ under each of its conditions \eqref{eq:BR1}, \eqref{eq:BR2}.

	\noindent\textbf{Case 1:} In order for a symmetric profile $(\gamma,\ldots,\gamma)$ to be an equilibrium and satisfy $C((M-1)\gamma;\dSP0) \leq \dRT0 \leq \frac{\dTR1}{\dPS1}\dSP0$, it must hold from Lemma \ref{lem:alpha-star} that $\gamma = \theta - \bar{\gamma}_{-i} = \theta - (M-1)\gamma$, which yields the unique equilibrium \eqref{eq:equil1}.

	\noindent\textbf{Case 2:}  In order for a symmetric profile $\gamma$ to be an equilibrium and satisfy $\max\{\frac{(M-1)\gamma- \theta}{\alpha - (M-1)\gamma},-\dTR1\} \leq \dRT0 < C((M-1)\gamma;\dSP0)$ (with $a \neq 0$), the equation $\gamma = F((M-1)\gamma)$ must hold. Its solutions satisfy the quadratic equation
	\begin{equation}\label{eq:quadratic_gamma}
		\begin{aligned}
			Q(\gamma) := M^2\gamma^2 + K_1 \gamma + K_0 = 0
 		\end{aligned}
	\end{equation}
	where
	\begin{equation}
		\begin{aligned}
    			K_1 &:= \frac{\alpha+\theta}{a}\left(2M\dgdn(\frac{\alpha}{\alpha+\theta}) - \frac{M-1}{b}Y \right) \\
    			K_0 &:= \frac{(\alpha+\theta)^2}{a^2}\dgdn(\frac{\alpha}{\alpha+\theta}) \left( \dgdn(\frac{\alpha}{\alpha+\theta}) - \frac{Y}{b} \right).
 		\end{aligned}
	\end{equation}
The solutions to this equation are given by \eqref{eq:equil_alpha_pm}. Observe that the radicand is positive, due to $\dgdn(\cdot) < 0$, and $b < 0$ because $\dRT0 < C((M-1)\gamma;\dSP0) < \dSP0$. This means that the solutions are real-valued. 

We can also verify that when $a<0$, we have $K_1 > 0$ and $K_0 < 0$, and when $a>0$, we have $K_1 < 0$ and $K_0 > 0$. In the case that $a<0$, it holds that $Q(0) < 0$ and $Q'(0) > 0$, which implies there is a unique positive solution given by $E_+$.  In the case that $a>0$, it holds that $Q(0) > 0$ and $Q'(0) < 0$, which implies that there are two positive solutions. However, only the solution $E_-$ provides a symmetric profile that is feasible in the restricted strategy sets of all players, i.e. $(E_-,\ldots,E_-) \in \mcal{C}$. 

It holds that $M E_+ < \theta$ (identical argument applies for $E_-$), since otherwise we would be in Case 1. 
It then follows that $C( (M-1)E_+; \dSP0) > C(\frac{M-1}{M}\theta;\dSP0)$, because $C(\cdot;\dSP0)$ is a decreasing function in its first argument. 
We are then able to characterize the equilibria in the first two entries of \eqref{eq:equil2}.

When $a = 0$, the equation $\gamma =  \frac{1}{2}\left(\frac{\theta\dSP0 + \alpha\dRT0}{\dSP0 - \dRT0} - (M-1)\gamma \right)$ must hold, which yields the equilibrium in the last entry of \eqref{eq:equil2}.

\end{proof}

In the symmetric equilibrium $\alpha^*$ of $\mcal{E}^*_M(\dSP0,\dRT0,\alpha,\theta)$, the players choose positive extraction rates, though never large enough to destroy the resource.

The equilibrium characterization in Theorem~\ref{thm:equilibrium}  has two primary regimes with respect to the responsible policy $(\dSP0,\dRT0)$. In the case of \eqref{eq:equil1} where mutual cooperation is highly incentivized (high $\dRT0$), the total extraction $M\alpha^*$ reaches its upper limit $\theta$. Here, we note that the boundary $C(\frac{M-1}{M}\theta;\dSP0)$ is monotonically decreasing in $M$ and reaches 0 in the limit. In the case of \eqref{eq:equil2}, mutual cooperation is not as highly incentivized. In this regime, the total extraction rate is less than the upper limit of $\frac{\alpha\dRT0 + \theta\dSP0}{\dSP0-\dRT0}$.

In the next section, we investigate the impact of equilibrium behavior on the system's resource level, the agents' overall utilities, and the total extraction rate.

\begin{figure*}[t]
  \centering
  
  \begin{minipage}{0.52\textwidth}
    \centering
    \includegraphics[width=1\linewidth]{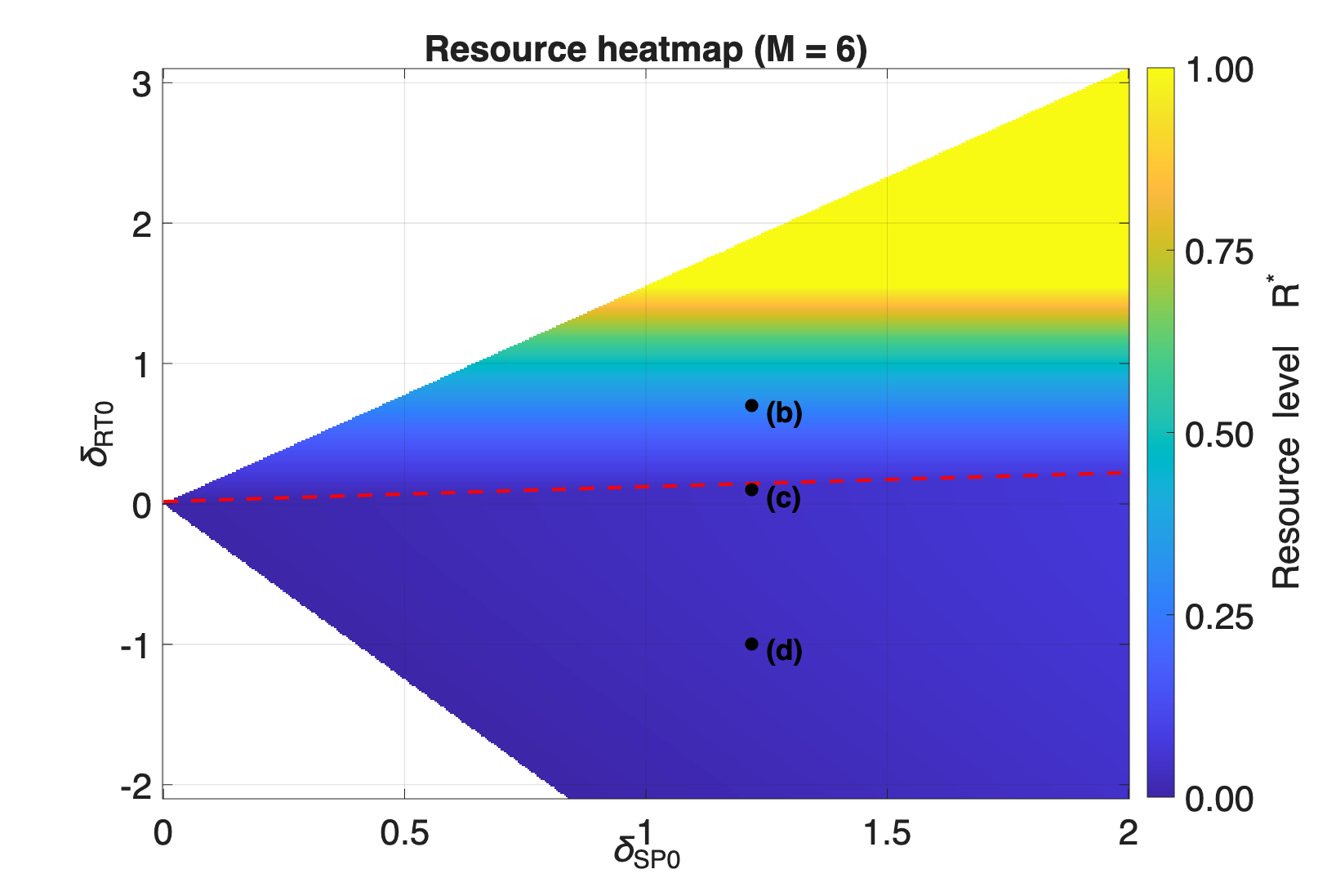}
    \subcaption{}
    \label{fig:big}
  \end{minipage}\hfill
  \begin{minipage}{0.48\textwidth}
    \centering
    \includegraphics[width=0.6\linewidth,height=0.1\textheight]{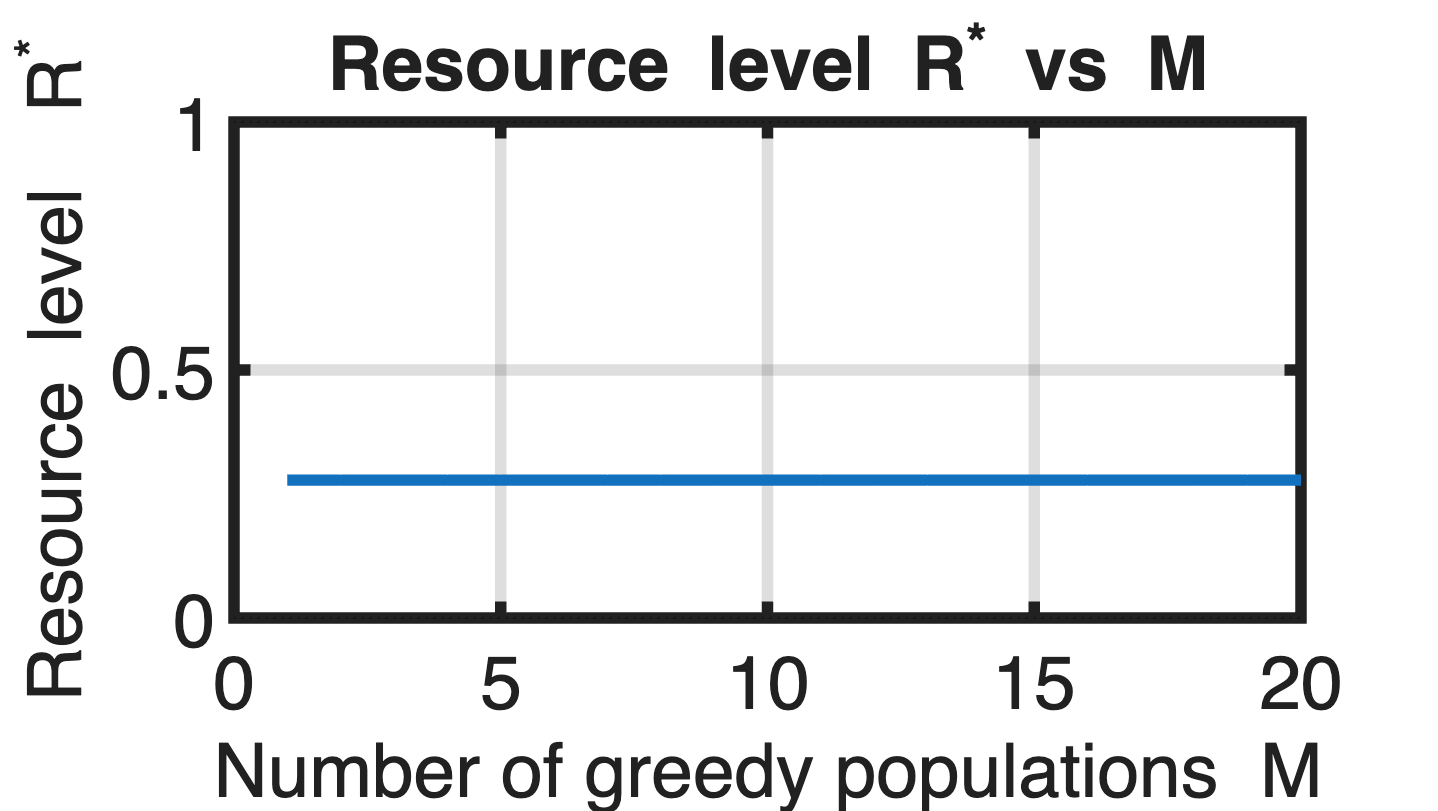}
    \subcaption{}
    
    \vspace{0.06cm}
    \includegraphics[width=0.6\linewidth,height=0.1\textheight]{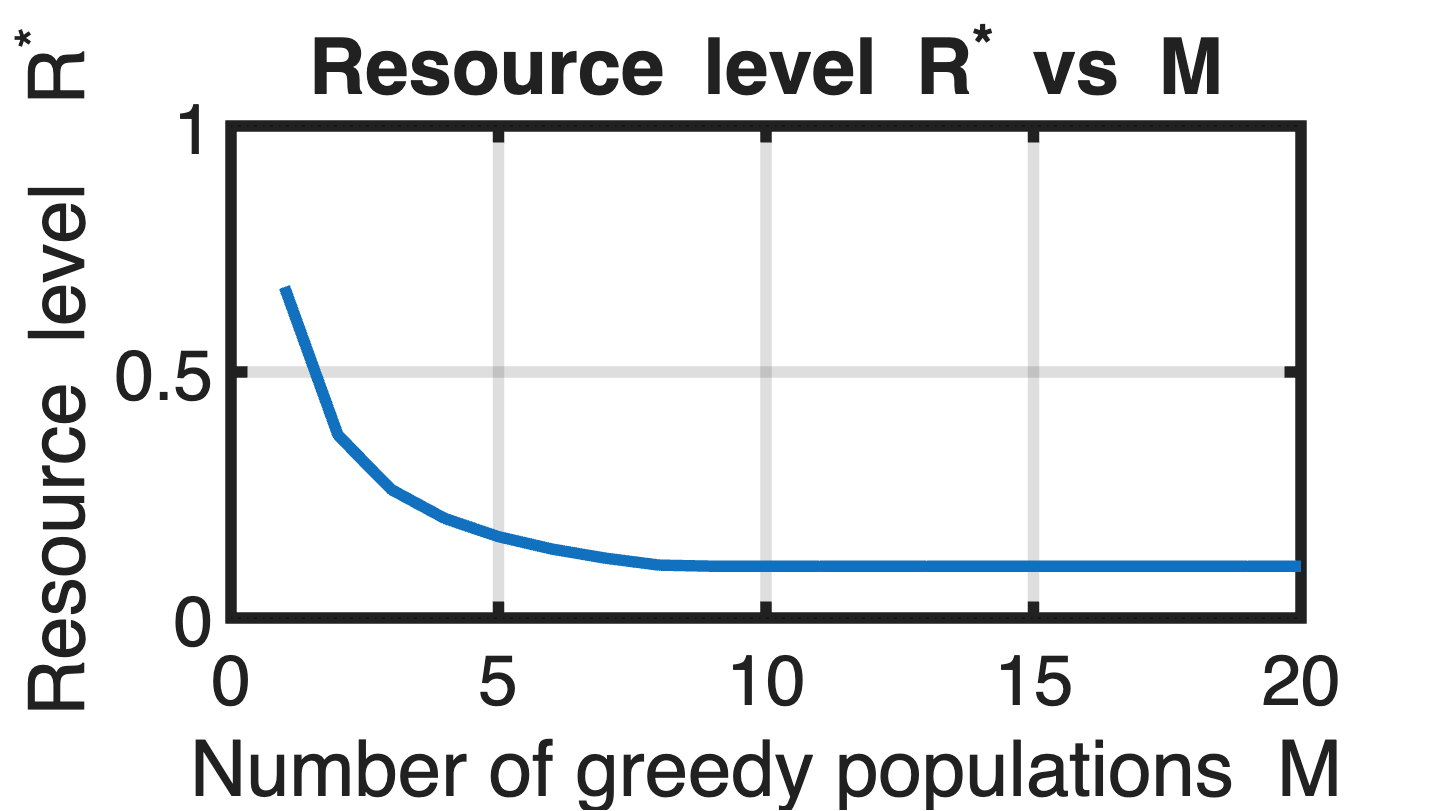}
    \subcaption{}
    
    \vspace{0.06cm}
    \includegraphics[width=0.6\linewidth,height=0.1\textheight]{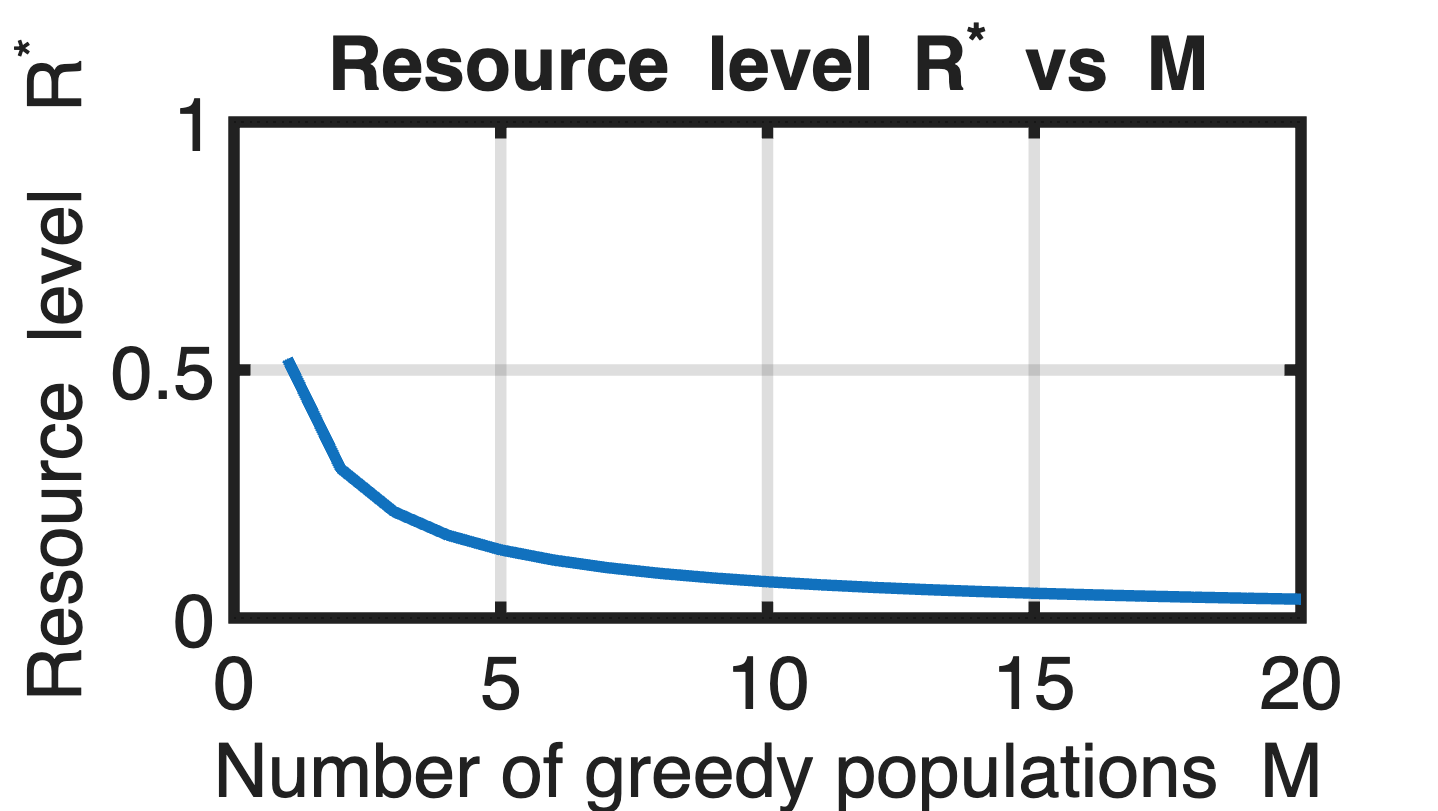}
    \subcaption{}
    \label{fig:smalls}
  \end{minipage}
  \caption{(a) Resource level $R^*_M$ under the symmetric equilibrium (Theorem~\ref{thm:equilibrium}) over the responsible policy space for $M=6$ greedy populations, with parameters $\theta=1.0,\ \alpha=0.40,\ \delta_{TR1}=2.1,\ \delta_{PS1}=2.0$. The red dashed line indicates the curve $C(0;\dSP0)$.
  Figures (b), (c), and (d) show the resource levels $R^*_M$ vs. the number of greedy populations $M$ for selected parameter values (shown as dots in subplot (a)).  
  In (b), $\delta_{RT0}=0.8$ is above $C(0;\dSP0)$, meaning that $\alpha^*=\theta/M$ for all $M \geq 1$, leading to a constant value for $R^*_M$. 
  In (c), $\delta_{RT0}=0.2 > 0$ is below $C(0;\dSP0)$. $R^*_M$ decreases initially then stabilizes at a positive constant (Proposition~\ref{prop:limit}). 
  In (d) $\delta_{RT0}=-1.0 < 0$. $R^*_M$ monotonically decreases toward zero, indicating complete resource depletion.}
  \label{fig:combined}
\end{figure*}

\section{Impact of equilibrium extraction}\label{sec:quality}

We are interested in assessing the quality of the symmetric equilibrium $\alpha^*$ characterized in Theorem \ref{thm:equilibrium}.
We will examine equilibrium outcomes as the number of greedy populations becomes large.
In particular, we are interested in determining in which cases the common resource could be preserved even with a growing number of greedy populations.
Let us denote $\alpha_M^*$ as the equilibrium (individual) extraction rate of any of the $M \geq 1$ greedy populations.
The result below characterizes the limiting values of the equilibrium total consumption rate, $\balph^*_M := M\alpha_M^*$, and the equilibrium resource level, $R^*_M := R(\balph^*_M)$.
\begin{proposition}\label{prop:limit}
	Consider any game $\mcal{E}^*_M(\dSP0,\dRT0,\alpha,\theta)$.
	If $0 < \dRT0 \leq \frac{\dTR1}{\dPS1}\dSP0$, then
	\begin{enumerate}
		\item $\bar{\alpha}^*_\infty := \lim_{M\rightarrow \infty} \balph^*_M = \theta$.
		\item $R^*_\infty := \lim_{M\rightarrow \infty} R^*_M = \frac{\dRT0}{\dRT0+\dTR1} > 0$.
	\end{enumerate}
    If $\max\{\frac{-\theta}{\alpha}\dSP0,-\dTR1\} \leq \dRT0 \leq 0$, then
	\begin{enumerate}
        \setcounter{enumi}{2}
		\item $\bar{\alpha}^*_\infty = \frac{\alpha\dRT0+\theta\dSP0}{\dSP0-\dRT0} \in (0,\theta)$. 
        \item $R^*_\infty = 0$.
	\end{enumerate}
\end{proposition}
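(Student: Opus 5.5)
The plan is to work directly from the equilibrium characterization in Theorem~\ref{thm:equilibrium} and split by the sign of $\dRT0$. The key observation is about the threshold $C(\frac{M-1}{M}\theta;\dSP0)$. Its first argument tends to $\theta$, so the factor $\frac{\theta-\bar\alpha_{-i}}{\alpha+\theta}$ tends to $0$, and
\[
C\Big(\tfrac{M-1}{M}\theta;\dSP0\Big)\to \tfrac12\big[-\dTR1+\sqrt{\dTR1^2}\big]=0 .
\]
The text after Theorem~\ref{thm:equilibrium} already notes that this threshold decreases monotonically in $M$. If instead the first argument is strictly below $\theta$, the radicand strictly exceeds $(\dTR1+\cdot)^2$ because $\dTR1\dSP0>0$, so $C>0$ for every finite $M$.

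\textbf{Case $0<\dRT0\le\frac{\dTR1}{\dPS1}\dSP0$.} Since $C(\frac{M-1}{M}\theta;\dSP0)\downarrow 0<\dRT0$, there is an $M_0$ such that for all $M\ge M_0$ we are in regime \eqref{eq:equil1}. There $\alpha^*_M=\theta/M$, so $\bar\alpha^*_M=\theta$ exactly, which proves item 1. For item 2, $\bar\alpha^*_M=\theta$ gives $x^*=1$, so we evaluate $R$ at $\bar\alpha=\theta$ using \eqref{eq:g_explicit}. We get $g(1,0)=b+d=\dRT0$ and $\dgdn(1)=a+c=-(\dRT0+\dTR1)$. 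Hence
\[
R^*_M=\frac{\dRT0}{\dRT0+\dTR1}
\]
for all $M\ge M_0$. This value is consistent with the value assigned at $\bar\alpha=\theta$ in \eqref{eq:resource_level}, and item 2 follows.

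\textbf{Case $\max\{-\frac{\theta}{\alpha}\dSP0,-\dTR1\}\le\dRT0\le 0$.} Here $C(\cdot)>0\ge\dRT0$, so we are always in regime \eqref{eq:equil2}. Rather than taking limits of the explicit roots $E_\pm$, which is messy and sign-dependent on $a$, I would use the first-order condition behind Lemma~\ref{lem:alpha-star}. The symmetric equilibrium $\gamma=\alpha^*_M$ is an interior critical point of $D$, so with $x_M:=\frac{\alpha+M\gamma}{\alpha+\theta}$ it satisfies
\[
g(x_M,0)\,\big(-\dgdn(x_M)\big)=\frac{\gamma\,Y}{\alpha+\theta}.
\]
Feasibility (the restricted strategy set \eqref{eq:restricted_strategies}) gives $0\le M\gamma\le \bar\alpha_{\max}:=\frac{\alpha\dRT0+\theta\dSP0}{\dSP0-\dRT0}$. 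Therefore $\gamma\le \bar\alpha_{\max}/M\to0$, and the right-hand side tends to $0$. By Assumption~\ref{assume:dgdn} and continuity on the compact interval $[0,1]$, $-\dgdn$ is bounded below by a positive constant. Hence $g(x_M,0)=bx_M+d\to0$ with $b=\dRT0-\dSP0<0$, which forces $x_M\to -d/b$, i.e. $\bar\alpha^*_M\to\bar\alpha_{\max}$. This gives item 3. Item 4 follows since $R^*_M=g(x_M,0)/(-\dgdn(x_M))\to 0$.

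Finally I would check $\bar\alpha_{\max}\in(0,\theta)$. Positivity is equivalent to $\dRT0>-\frac{\theta}{\alpha}\dSP0$. The inequality $\bar\alpha_{\max}<\theta$ is equivalent to $(\alpha+\theta)\dRT0<0$. The main obstacle is the boundary cases $\dRT0=0$ and $\dRT0=-\frac{\theta}{\alpha}\dSP0$, where the claimed open interval degenerates to an endpoint. I would treat these separately, reading the interval as closed there, or noting that the hypothesis should be strict.
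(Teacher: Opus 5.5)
Your proof is correct. For items 1--2 it follows the paper's argument: $C(\frac{M-1}{M}\theta;\dSP0)\to C(\theta;\dSP0)=0<\dRT0$, so eventually $\alpha^*_M=\theta/M$. Your check that $g(1,0)=b+d=\dRT0$ and $\dgdn(1)=a+c=-(\dRT0+\dTR1)$ supplies the resource value the paper only asserts.

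For items 3--4 you take a genuinely different route. The paper works from the closed form, with these steps:
\begin{enumerate}
\item it writes $\balph^*_M=ME_+$ for $a<0$ and passes to the limit by L'H\^opital;
\item it recovers $\frac{\alpha\dRT0+\theta\dSP0}{\dSP0-\dRT0}$ only after substituting $Y=bc-ad$;
\item it leaves the cases $a>0$ (root $E_-$) and $a=0$ to ``similar calculations'';
\item it verifies item 4 separately by plugging the limit into $g(\cdot,0)$.
\end{enumerate}
You instead combine three ingredients: the stationarity condition $g(x_M,0)\,(-\dgdn(x_M))=\gamma Y/(\alpha+\theta)$, the feasibility bound $M\gamma\le\frac{\alpha\dRT0+\theta\dSP0}{\dSP0-\dRT0}$, and the uniform bound $-\dgdn(x)\ge\min\{\dPS1+\dSP0,\dRT0+\dTR1\}>0$ for $x\in[0,1]$. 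This handles all signs of $a$ at once. It does not depend on which root of $Q$ is selected in Theorem~\ref{thm:equilibrium}; it only needs the equilibrium to be a feasible interior critical point. It also yields items 3 and 4 together, because the limit is identified directly as the zero of $g(\cdot,0)$. The paper's computation gives in exchange an explicit closed form for $\balph^*_M$ at each finite $M$, which you do not need.

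Your endpoint remark is also right, and the paper does not address it. The limit equals $\theta$ at $\dRT0=0$, and equals $0$ at $\dRT0=-\frac{\theta}{\alpha}\dSP0$ when that lower bound is the active one. So the open interval $(0,\theta)$ in item 3 requires strict inequalities. In the second endpoint case the feasible set is $\{0\}$, and items 3--4 hold trivially without the first-order condition.
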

\begin{proof}
    In the limit of large $M$, $C(\frac{M-1}{M}\theta;\dSP0) \rightarrow C(\theta;\dSP0) = 0$.
    Thus, if $\dRT0 > 0$, the equilibrium consumption $\alpha^* = \frac{\theta}{M}$ for sufficiently large $M$ \eqref{eq:equil1}.
    We therefore obtain the limits for items 1 and 2.
    
    Now, suppose $\max\{\frac{-\theta}{\alpha}\dSP0,-\dTR1\} \leq \dRT0 \leq 0$, and suppose $a<0$. 
    By \eqref{eq:equil2}, we have $\bar{\alpha}^*(M) = M\cdot E_+$. 
    Applying L'Hopital's rule, we obtain $\bar{\alpha}^*_\infty = -\frac{\alpha+\theta}{a}(\dgdn(\frac{\alpha}{\alpha+\theta}) - \frac{Y}{2b}) + \frac{(\alpha+\theta)Y}{2ab}$. 
    This expression, after substituting $Y = bc-ad$ \eqref{eq:Y_def}, becomes $\frac{\alpha\dRT0+\theta\dSP0}{\dSP0-\dRT0}$, which is the total consumption capacity in this regime.
    Similar calculations can be done when considering $a \leq 0$, by applying the other cases of \eqref{eq:equil2}. This yields item 3.
    To calculate $R^*_\infty = -\frac{g(\frac{\alpha+\bar{\alpha}^*_\infty}{\alpha+\theta},0)}{\dgdn(\frac{\alpha+\bar{\alpha}^*_\infty}{\alpha+\theta})}$, we simply observe that $g(\frac{\alpha+\bar{\alpha}^*_\infty}{\alpha+\theta},0) = b \frac{\dSP0}{\dSP0-\dRT0} + \dSP0 = 0$ 
    (after substituting $b$ from~\eqref{eq:g_coeffs}). This yields item 4. 
    

\end{proof}
The significant observation in Proposition \ref{prop:limit} is that the responsible population's policy $(\dSP0,\dRT0)$ will determine whether the common resource can be sustained, or will become depleted with a growing number of greedy populations.
The multi-population system \emph{averts a tragedy of the commons} in the regime $0 < \dRT0 \leq \frac{\dTR1}{\dPS1}\dSP0$, and \emph{induces a tragedy of the commons} in the regime $\max\{\frac{-\theta}{\alpha}\dSP0,-\dTR1\} \leq \dRT0 \leq 0$.
In both regions, each greedy population's individual utility, $\alpha^* R_M^*$, diminishes to zero as $M$ increases. 
Moreover, the total equilibrium consumption $\balph_M^*$ approaches the resource's capacity~\eqref{eq:restricted_strategies}.

Figure~\ref{fig:combined} provides numerical illustrations of the equilibrium resource levels as they depend on the responsible policy $(\dSP0,\dRT0)$ and the number of populations $M$. Here, we observe three distinct regimes. For large values of $\dRT0$, the total equilibrium consumption is the upper limit $\theta$ for any number of populations $M$, and thus the resource level remains a positive constant value (no tragedy, see Figure~\ref{fig:combined}(b)).  For intermediate values of $\dRT0$, the equilibrium resource level is decreasing in $M$, but it settles to a positive constant value (no tragedy, see Figure~\ref{fig:combined}(c)). For low values of $\dRT0$, the equilibrium resource level is decreasing in $M$ and it converges to zero (a tragedy, see Figure~\ref{fig:combined}(d)).

\section{Conclusion}

In this paper, we investigate a resource extraction game whose players are high-level decision-makers who can determine the consumption rates of the local populations they represent. The players seek to extract as much of the resource for their populations as possible, however, higher total extraction degrades the resource. Our primary results establish a unique symmetric Nash equilibrium. We then identified conditions for which the equilibrium extraction behavior can lead to the destruction of the resource as the number of players increases, a scenario referred to as a tragedy of the commons. The hierarchical extraction game studied here captures real-world settings such as multiple countries, firms, or regional authorities extracting from a shared environmental resource (e.g., fisheries, groundwater basins, or atmospheric carbon sinks), where each decision-maker controls aggregate consumption behavior within its jurisdiction while collectively impacting a common resource.

This work may be extended along a few directions. The uniqueness of the symmetric equilibrium could be rigorously established using the theory of $M$-player concave games. Also, we seek to characterize equilibria in scenarios where populations have different masses. An important direction for future work is to relax the symmetry assumptions by incorporating heterogeneity across populations or considering asymptotic regimes with many players, thereby improving the model’s relevance to large-scale real-world systems.

\bibliographystyle{IEEEtran}
\bibliography{library}

\end{document}